\def\BibTeX{{\rm B\kern-.05em{\sc i\kern-.025em b}\kern-.08em
		T\kern-.1667em\lower.7ex\hbox{E}\kern-.125emX}}
\DeclarePairedDelimiter{\abs}{\lvert}{\rvert}
\DeclarePairedDelimiter{\norm}{\lVert}{\rVert}
\newtheorem{thm}{Theorem}
\newtheorem{cor}{Corollary}
\newtheorem{proposition}{Proposition}
\newtheorem{definition}{Definition}
\newtheorem{example}{Example}
\theoremstyle{remark}
\begin{document}
	\title{Design and Analysis of Delayed Bit-Interleaved Coded Modulation with LDPC Codes}
	\author{Yihuan~Liao,~\IEEEmembership{Student Member,~IEEE,}
			        Min~Qiu,~\IEEEmembership{Member,~IEEE,}
			        and~Jinhong~Yuan,~\IEEEmembership{Fellow,~IEEE}\thanks{This work was supported in part by the Australian Research Council Discovery Projects under Grant DP 190101363 and in part by the Linkage Project under Grant LP 170101196. This paper was presented in part at the 2019 IEEE Information Theory Workshop (ITW), Visby, Gotland, Sweden \cite{8989384}.} \thanks{Y. Liao, M. Qiu and J. Yuan are with the School of Electrical Engineering and Telecommunications, University of New South Wales, Sydney, NSW 2052, Australia (e-mail: yihuan.liao@student.unsw.edu.au; min.qiu@unsw.edu.au; j.yuan@unsw.edu.au).}}
	\markboth{IEEE Transactions on Communications}
	{Draft paper}

	\maketitle
	\begin{abstract}
		This paper investigates the design and performance of delayed bit-interleaved coded modulation (DBICM) with low-density parity-check (LDPC) codes. For Gray labeled square $M$-ary quadrature amplitude modulation (QAM) constellations, we investigate the optimal delay scheme with the largest spectrum efficiency of DBICM for a fixed maximum number of delayed time slots and a given signal-to-noise ratio. When analyzing the capacity of DBICM, we find two important properties: the capacity improvement due to delayed coded bits being mapped to the real and imaginary parts of the transmitted symbols are independent of each other; a pair of delay schemes with delayed coded bits having identical bit-channel capacity lead to equivalent DBICM capacity. Using these two properties, we efficiently optimize the delay scheme for any uniform Gray-QAM systems. Furthermore, these two properties enable efficient LDPC code designs regarding unequal error protection via bit-channel type classifications. Moreover, we use protograph-based extrinsic information transfer charts to jointly optimize degree distributions and channel assignments of LDPC codes and propose a constrained progressive edge growth like algorithm to jointly construct LDPC codes and bit-interleavers for DBICM, taking distinctive bit-channel's capacity into account. Simulation results demonstrate that the designed LDPC coded DBICM systems significantly outperform LDPC coded BICM systems.
	\end{abstract}
	\begin{IEEEkeywords}
	Low-density parity-check (LDPC) code, delayed bit-interleaved coded modulation (DBICM), bit-interleaved coded modulation (BICM).
	\end{IEEEkeywords}
	\IEEEpeerreviewmaketitle

	\section{Introduction}
	
	Bit-interleaved coded modulation (BICM) \cite{bicm} is a pragmatic approach to achieve reliable communications with high spectrum efficiency by combining error correction codes and high order modulations. Uniform Gray labeled BICM has been extensively investigated for many wireless and optical communication systems \cite{bicm_wireless, szczecinski2015bit, 6844864, 4026714}, appreciating its close-to-coded modulation capacity performance. To achieve the best possible performance, BICM often employs modern channel codes, e.g., turbo codes \cite{bicm_turbo}, low-density parity-check (LDPC) codes \cite{Hou}, and polar codes \cite{bicm_polar}. In addition, quasi-cyclic LDPC codes designed for BICM schemes have been proposed in \cite{9050854} as a hardware friendly approach.
	
	Recently, a generalized BICM, namely, delayed-BICM (DBICM), was proposed in \cite{DBICM} to improve the transmission reliability over BICM. To be specific, DBICM modulates sub-blocks from multiple codewords and uses the decoded sub-blocks to improve the detection of other sub-blocks being modulated into the same signal sequence. Different from multilevel coding (MLC) that uses multiple channel codes with different code rates at each level \cite{580199,333853}, DBICM uses a single channel code. It is known that in BICM, each codeword is independently modulated and transmitted in a single time slot. However, in DBICM, a codeword is divided into the delayed and undelayed sub-blocks. Instead of transmitting all sub-blocks of each codeword within a single time slot, these sub-blocks are transmitted separately in multiple time slots. As a result, the decoding of a codeword is not performed until all sub-blocks of the codeword are received. Once decoded, the sub-blocks of the codeword can be viewed as known \cite{DBICM}. Knowing the delayed sub-blocks will effectively reduce the signal constellation size in demodulation. Therefore, the reliability of the demodulated signals for the undelayed sub-blocks is improved by using the extrinsic information of the decoded signals from delayed sub-blocks. As the reliability of the sub-blocks transmitted in the undelayed bit-channels improves, the capacities of these bit-channels are also increased. The authors in \cite{DBICM_cap} have developed tools to compute the capacity of DBICM for a given delay scheme, constellation, and labeling. It has been shown in \cite{DBICM, DBICM_cap} that the capacity of DBICM is bounded between the constellation constrained capacity and BICM capacity. The authors have also designed bit labels to achieve higher capacity improvement for DBICM under half $16$-QAM where Gray labeling does not apply. Another constellation labeling design for 16-QAM DBICM with iterative detection and decoding (DBICM-ID) has been investigated in \cite{8437452}. Despite their success, it is unclear how to design the delay schemes for a DBICM system to achieve the best possible spectrum efficiency for a fixed maximum number of delayed time slots. From our observation, in DBICM, randomly choosing a delay scheme may result in negligible performance gain over BICM. Moreover, a delay scheme that achieves a large DBICM capacity in some signal-to-noise ratio (SNR) region may not perform well in other SNR regions. To the best of our knowledge, the design of delay schemes and their impacts on the DBICM capacity have not been fully investigated in the literature.	 

	In addition to the delay scheme designs, it is also desirable to design good LDPC codes for DBICM systems. Extrinsic information transfer (EXIT) charts and progressive edge growth (PEG) \cite{PEG} are convenient tools for designing capacity-approaching LDPC codes with large girth in the Tanner graph. However, in high order modulated systems, such as BICM and DBICM, LDPC codes designed for uniform bit-channel capacity are not optimal due to the fact that the unequal error protection (UEP) among bit-channels is not considered. To address the UEP, previous works in \cite{1375222, bit_design_1, bit_design_2, bit_design_3, junyi_bit_mapping} have shown that it is capable of improving the bit error rate (BER) performance via bit mapper designs that map the positions of LDPC coded bits to appropriate BICM bit-channels according to their bit-channel capacities. By directly incorporating the different bit-channel properties in the code design \cite{Hou}, optimized irregular LDPC codes have been proposed in \cite{4024290} by adopting the extrinsic information transfer (EXIT) charts \cite{EXIT_chart}. Later, the authors in \cite{MET} extend the EXIT chart to multiple dimensions, from an edge perspective, to design multi-edge type (MET) LDPC codes for high order modulations. To design MET LDPC codes for modulation levels higher than $16$-QAM, they have suggested a high-order extension in a nesting fashion, which extends the edge types, starting from two, of their codes by optimizing only one additional edge type in each step. A simpler approach has been proposed in \cite{7339431} from a protograph-based approach, which is desirable to be applied for large constellations and obtains further decoding performance improvements over \cite{MET}. Specifically, \cite{7339431} represents each bit channel in BICM by a surrogate channel \cite{1638622} and jointly optimizes the protograph ensemble, also known as the base matrix, and the bit mapper for the surrogate channels. However, the limitation of the code design in \cite{7339431} lies in the integer nature of the base matrices. Recently, in \cite{junyi_regandirr}, the authors designed LDPC codes for uniform Gray-labeled 16-QAM which improves the decoding performance over \cite{MET,7339431} for the same constellation. They have proposed a variation of the MET-EXIT chart and jointly designed LDPC codes with bit mapping for BICM systems by dividing bit-channels into two types according to their capacities. However, the reliability of the edge can also be affected by the degree of VNs that are connected to the edge, which has not been considered by the MET-EXIT chart analysis for LDPC coded BICM schemes in \cite{junyi_regandirr}. For example, a VN assigned to an unreliable bit-channel with a high degree may be more reliable than a VN assigned to a reliable bit-channel with a low degree. Furthermore, there exist large capacity differences within both the reliable and unreliable channels in high order modulations, especially for DBICM systems. Without distinctive bit-channels with different capacities, the decoding threshold estimated by the MET-EXIT chart, as conducted in \cite{junyi_regandirr}, would be inaccurate in DBICM systems. 
	
	In this work, we focus on the design of LDPC coded DBICM schemes. We choose the protograph-based EXIT (PEXIT) chart \cite{PEXIT_chart} as an analytical tool to estimate the decoding threshold of the designed irregular LDPC codes, since the decoding thresholds of protograph LDPC codes in high order modulation systems computed from PEXIT chart and Monte-Carlo density evolution \cite{910577,910578} are close as shown in \cite{7339431}. However, instead of employing a base matrix in the code design, we directly construct the parity-check matrix of an LDPC code from its degree distributions. In addition, note that the iterative detection and decoding can improve the performance of BICM and DBICM at the cost of complexity and latency \cite{DBICM_cap,761047}. For some applications such as optical communications or video streaming services\cite{DBICM}, where latency is one of the main concerns, we will not consider iterative receivers for DBICM schemes in the paper. In the following, we summarize the main contributions of the work as below:
	\begin{itemize}
		\item We investigate the performance of uniform Gray labeled $M$-QAM DBICM systems. First, we analyzed the impacts of the delay scheme on DBICM capacity and developed a delay scheme optimization method. Specifically, we prove that the DBICM with Gray labeled square $M$-QAM constellations exhibits two properties: 1) Delaying any bit or a group of bits mapped to the real part of the modulated signals only affects the bit-channel capacity of the undelayed bits mapped to the real part of the signal constellations. It does not affect the bit-channel capacity of any delayed or undelayed bits mapped to the imaginary part of the signals, and vice versa. 2) Equivalent DBICM capacity can be achieved by a pair of delay schemes with delayed coded bits sharing identical bit-channel capacity. With these two properties, an $M$-QAM DBICM system can be decomposed into two independent but symmetric DBICM $\sqrt{M}$-PAM systems. Therefore, the problem of finding the optimal delay scheme for a uniform Gray labeled $M$-QAM DBICM system is reduced to finding the optimal delay scheme for the underlying uniform Gray labeled $\sqrt{M}$-PAM DBICM system. 
		\item By exploiting the properties and proposed method, we find the optimal delay scheme that allows DBICM to achieve a target spectrum efficiency under a fixed maximum number of delayed time slots $T_{\text{max}}$ with the lowest SNR. To be specific, we choose $T_{\text{max}} = 1$ to reduce the spectrum efficiency loss and find the optimal delay schemes for DBICM with uniform Gray labeled $16$-QAM, $64$-QAM, $256$-QAM, and $1024$-QAM. We show that the capacities of the $16$-QAM, $64$-QAM, $256$-QAM, and $1024$-QAM DBICM with optimized delay schemes are within $0$,  $0.1$, $0.3$ and $0.25$ dB away from their corresponding constellation constrained capacities, respectively.
		\item Based on the optimized delay scheme, we have designed BICM and DBICM LDPC codes jointly with bit mapping. By using the two properties of DBICM, we classify the bit-channels into several types, where the bit-channels within each type share identical bit-channel capacity. Furthermore, we introduce a channel assignment matrix $\mathbf{P}$ that assigns the variable node (VN) degree distribution $\lambda$ to each specific bit-channel type. Moreover, we propose a constrained PEG-like code construction algorithm to construct LDPC codes, according to channel assignments with large girth. Also, we propose a code optimization method and a constrained PEG-like code construction algorithm to jointly design LDPC code with bit mapping via optimizing channel assignment to obtain good belief propagation decoding thresholds, which are computed by the PEXIT chart.
		\item Numerical results show that the designed LDPC codes with the optimized delay schemes significantly outperform the LDPC coded BICM. More specifically, the designed $16$-QAM DBICM schemes outperform the corresponding BICM schemes by up to $0.5$ dB. For the $64$-QAM DBICM schemes, these improvements are up to $0.7$ dB. Furthermore, the performance gain obtained by the designed DBICM systems over their BICM counterparts increases with modulation level. 
	\end{itemize}
	
	Hereafter, we use normal font letters, such as $x$ and $X$, to represent scalars. Vectors and matrices are denoted by boldface letters such as $\mathbf{x}$ and $\mathbf{X}$. To show the elements in a length $N$ vector $\mathbf{x}$, we use $\mathbf{x}=[x_i]_{i=1}^N$ and $\mathbf{x}=[x_1,...,x_N]$ interchangeably in this paper. Furthermore, sets are represented by calligraphic letters, such as $\mathcal{X}$. Expectations are denoted by $\mathop{\mathbb{E}}[\cdot]$. $\mathbb{R}^{N\times M}$ denotes an $N \times M$ matrix with real entries. Moreover, we use $\mathbb{F}_2^{q}$ to represent a collection of binary vectors with length $q$. $\mathbf{0}_{a,b}$ and $\mathbf{1}_{a,b}$ represent a matrix of size $a \times b$ whose entries are either all zeros or ones, respectively. In addition, $\Re(\cdot)$ and $\Im(\cdot)$ represent the real and imaginary part of a symbol or set of symbols, respectively. 

	\section{System model}
	
	In this section, we describe the delayed bit-interleaved coded modulation (DBICM) systems over an additive white Gaussian noise (AWGN) channel. Particularly, Gray labeled uniform QAM is investigated in this paper. We focus on the design and analysis for DBICM without iterative detection and decoding schemes. Therefore, Gray labeled uniform QAM is selected as it is commonly used in BICM schemes without information feedback \cite{bicm, bicm_wireless}.
	
		\begin{figure}[h]
		\centering
		\includegraphics[width=0.45\textwidth]{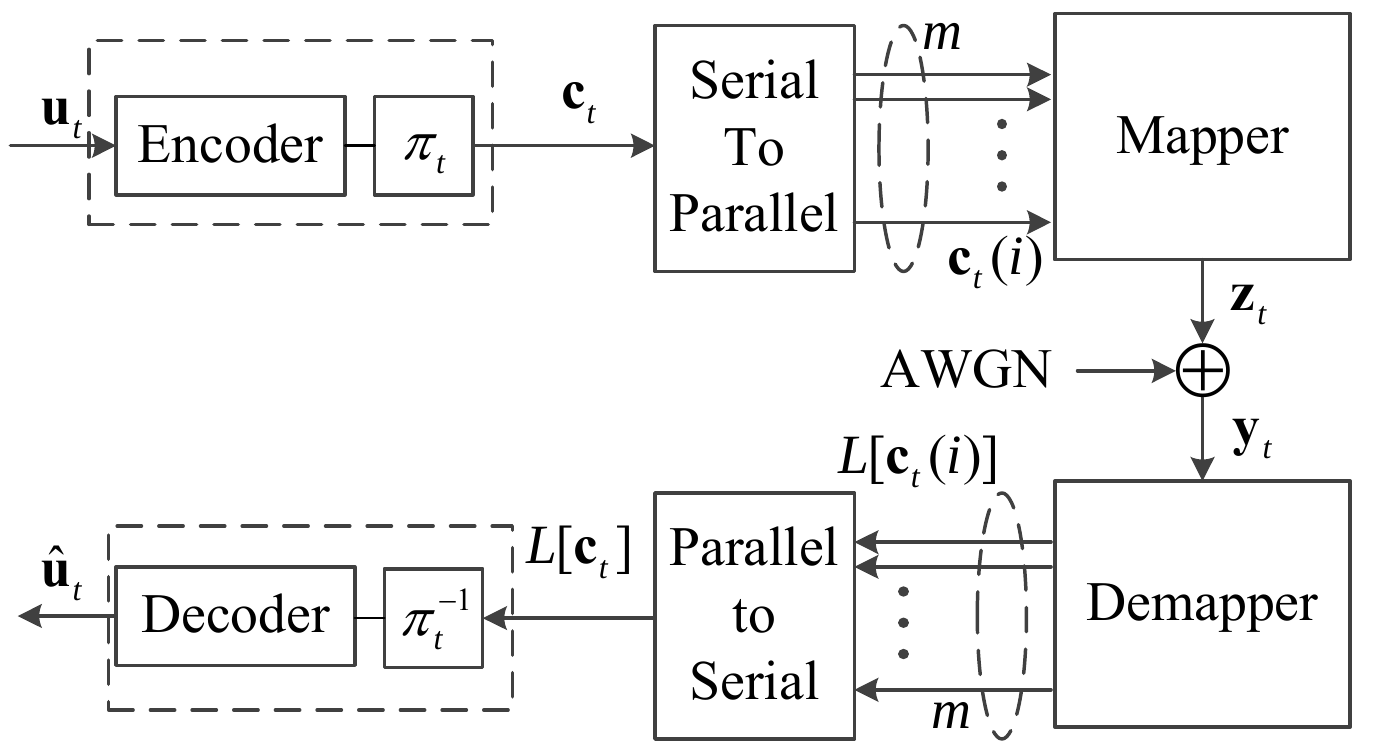}
		\caption{Block diagram of BICM transmitter and receiver structure.~~~~~~}
		\label{fig:BICM}
	\end{figure}
	
	\begin{figure}[h]
		\centering
		\includegraphics[width=0.45\textwidth]{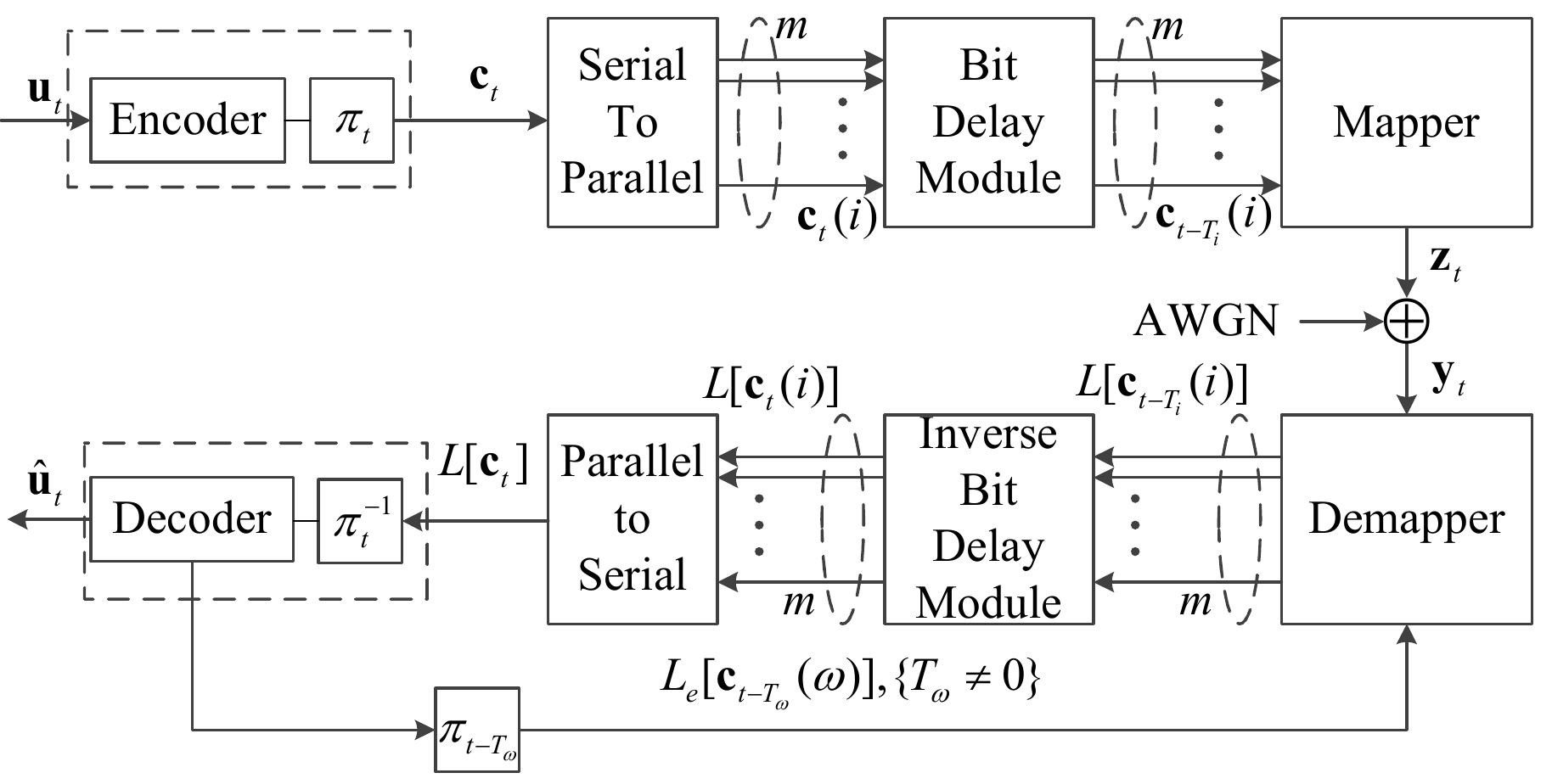}
		\caption{Block diagram of DBICM transmitter and receiver structure.~~~~~~}
		\label{fig:DBICM_structure}
	\end{figure}
	
	The block diagram of an LDPC coded BICM and DBICM system are shown in Fig. \ref{fig:BICM} and Fig. \ref{fig:DBICM_structure}, respectively. At the transmitter side, at time $t$, a binary information sequence $\mathbf {u}_{t}$ of length $K$ is encoded and interleaved to become a codeword $\mathbf{c}_{t}$ of length $N$ with code rate $R = K/N$. Here, the LDPC code and the interleaver are jointly designed. After a $1$-to-$m$ serial to parallel module, $m$ sub-blocks $\mathbf{c}_{t}(i), i\in\{0,...,m-1\}$, of $N/m$ bits are obtained, where the $m = \log_2{M}$ represents the modulation level. For simplicity, $N$ is assumed to be dividable by $m$. Next, each sub-block $\mathbf{c}_{t}(i)$ is delayed by $T_{i}$ time slots to generate $\mathbf{c}_{t-T_{i}}(i)$, $i\in\{0,...,m-1\}$. We use a delay scheme $\mathbf{T}=[T_i]_{i=0}^{m-1}$ to denote the number of delayed time slots for each sub-block which is known at both the transmitter and receiver side. Let $T_{\text{max}}$ and $T_{\text{min}}$ represent the maximum and minimum delayed time slot, respectively. We consider $T_{\text{min}}=0$. After the bit delay module, $m$ sub-blocks $\mathbf{c}_{t-T_i}(i), i\in\{0,1,...,m-1\}$, are mapped to a sequence of $M$-QAM symbols $\mathbf{z}_{t}$ of length $n = N/m$, where $\mathbf{z}_t = [z_t^0, z_t^1, ..., z_t^{n-1}]$. The modulated sequence of signals $\mathbf{z}_{t}$ is transmitted over an AWGN channel. We refer to each coded bit $b_i$ associated with the undelayed sub-blocks $\mathbf{c}_{t-T_i}(i)$ where $T_i = 0$ as the {\it{undelayed coded bit}}. Similarly, a coded bit associated with the sub-blocks $\mathbf{c}_{t-T_i}(i)$ where $T_i \neq 0$ is referred to as the {\it{delayed coded bit}}.
	
	The receiver obtains noisy signals $\mathbf{y}_{t} = \mathbf{z}_{t} + \mathbf{n}_{t}$, which are fed to the de-mapper, where $\mathbf{y}_t = [y_t^0, y_t^1, ..., y_t^{n-1}]$ and $\mathbf{n}_{t}$ represents the AWGN noise samples with zero mean and vairance $\sigma^2$ per real or imaginary dimension. The demapper performs a maximum a posteriori (MAP) symbol-to-bit metric calculation and its (soft) output can be derived as follows. Use $\chi$ to denote the Gray labeled $M$-QAM with a square signal constellation. For the $j$-th transmitted signal at time slot $t$, $z_t^j$, let $\mathbf{c}_t^j$ denotes its coded $m$ bit label $\mathbf{c}_t^j = [c_{t-T_0}^j(0),c_{t-T_1}^j(1),...,c_{t-T_{m-1}}^j(m-1)]$. For the received signal $y_t^j$, the demapper estimates the transmitted signal $\hat{z}_t^j$ and its corresponding bit label $\hat{\mathbf{c}}_t^j = [\hat{c}_{t-T_0}^j(0), \hat{c}_{t-T_1}^j(1), ..., \hat{c}_{t-T_{m-1}}^j(m-1)]$. Without any information feedback from the decoder, the demapper computes the log-likelihood ratios (LLRs) of each coded bit $c_{t-T_i}^j(i)$ from the $j$-th received signal $y_t^j$ in the received signal sequence $\mathbf{y}_t$, which is given by 
	\begin{equation} \label{eq:demo_without_ap}
	L[c_{t-T_i}^j(i)|y_t^j] = \ln\left(\dfrac{\sum\limits_{\{\hat{z}_t^j\in\chi|\hat{c}_{t-T_i}^j(i) = 0\}}e^{-\frac{\norm{{y}_t^j-{\hat{z}_t^j}}^2}{2\sigma^2}}}{\sum\limits_{\{\hat{z}_t^j\in\chi|\hat{c}_{t-T_i}^j(i) = 1\}}e^{-\frac{\norm{{y}_t^j-{\hat{z}_t^j}}^2}{2\sigma^2}}}\right).
	\end{equation}
	
	After demodulating all $m$ sub-blocks associated with the received signal sequence $\mathbf{y}_t$ , the obtained LLRs $L[\mathbf{c}_{t-T_i}(i)]$ are then passed to the inverse bit delay module to recover $L[\mathbf{c}_t(i)]$. Then, $L[\mathbf{c}_t(i)]$ enters an $m$-to-$1$ parallel to serial module.
	
	Note that at time $t$, the LLRs $\{L[c_{t-T_i}^j(i)|y_t^j]\}$ of all coded bits $\{c_{t-T_i}^j(i)\}$ for code $\mathbf{c}_{t-T_{\text{max}}}$ are obtained, these LLRs are sent to the decoder to recover the transmitted codeword $\mathbf{c}_{t-T_{\text{max}}}$.
	
	When the decoder successfully decodes the transmitted codeword $\mathbf{c}_{t-T_\text{max}}$ at time $t$, all the sub-blocks of the codeword $\mathbf{c}_{t-T_{\text{max}}}$ are known to the receiver. In this case, the decoder feeds back to the delayed sub-blocks to the demapper, which will refine its demodulation outputs for other sub-blocks at time slots $t'\in\{t-T_{\text{max}}+1, t-T_{\text{max}}+2, ..., t\}$. Let us assume that at time $t'$, the sub-blocks of $\mathbf{c}_{t'-T_{\omega}}(\omega)$ are known, where $T_{\omega}\neq 0$ representing the delay of the $\omega$-th sub-block for the codeword $\mathbf{c}_{t'-T_{\omega}}$. Then the undelayed sub-blocks and the delayed sub-blocks $\mathbf{c}_{t'-T_i(i)}$ with delay $T_i<T_{\omega}$ can be demodulated with the known $\mathbf{c}_{t'-T_{\omega}}(\omega)$. The LLR output of the demapper for these undelayed and less delayed coded bits is given by
	\begin{align} \label{eq:demo_with_hard}
	&L\left[c_{t'-T_i}^j(i)|y_{t'}^j, c_{t'-T_{\omega}}^j({\omega})\right] \notag \\
	= & \ln\left(\dfrac{\sum\limits_{\{\hat{z}_{t'}^j\in\chi|\hat{c}_{t'-T_i}^j(i)=0,\hat{c}_{t'-T_{\omega}}^j({\omega}) = c_{t'-T_{\omega}}^j({\omega})\}}e^{-\frac{\norm{{y}_{t'}^j-{\hat{z}_{t'}^j}}^2}{2\sigma^2}}}{\sum\limits_{\{\hat{z}_{t'}^j\in\chi|\hat{c}_{t'-T_i}^j(i)=1,\hat{c}_{t'-T_{\omega}}^j({\omega}) = c_{t'-T_{\omega}}^j({\omega})\}}e^{-\frac{\norm{{y}_{t'}^j-{\hat{z}_{t'}^j}}^2}{2\sigma^2}}}\right).
	\end{align} With the known sub-blocks of $\mathbf{c}_{t'-T_{\omega}}(\omega)$ at time $t'$, the demapper effectively has a reduced signal constellation size. Therefore, compared to the initial demapper output in Eq. (\ref{eq:demo_without_ap}), this refined demapper output will be improved. 
	
	On the other hand, when the decoding of the transmitted codeword $\mathbf{c}_{t-T_{\text{max}}}$ is unsuccessful at time $t$, the hard-decision of the codeword $\mathbf{c}_{t-T_{\text{max}}}$ is not reliable at the receiver. However, the estimated probability of a coded bit $c_{t-T_{\text{max}}}^j(i)$ being $b\in\{0,1\}$, $P_{b}(c^j_{t-T_{\text{max}}}(i))$, can be calculated from the decoder output. Let the LLR output for each coded bit from decoding the $j$-th transmitted signal at time slot $t$ be $L_e[c^j_{t-T_{\text{max}}}(i)]$. Then, we can compute the a priori information for each coded bit as
	\begin{equation} \label{eq:prob}
	\left\{ \begin{array}{ll}
	P_0(c^j_{t-T_{\text{max}}}(i)) = \frac{1}{1+e^{-L_e[c^j_{t-T_{\text{max}}}(i)]}},\\
	P_1(c^j_{t-T_{\text{max}}}(i)) = 1-\frac{1}{1+e^{-L_e[c^j_{t-T_{\text{max}}}(i)]}}.
	\end{array}\right.
	\end{equation} In this case, the soft-decision feedback from the decoder can also refine the demodulation outputs for \textit{other sub-blocks} at time slots $t'\in\{t-T_{\text{max}}+1, t-T_{\text{max}}+2, ..., t\}$, which has not been decoded yet. Note that the soft-decision feedback in DBICM is different from iterative detection and decoding which passes the LLR and extrinsic information from the demapper and the decoder to the current sub-block at each decoding iteration \cite{1237404}, as our scheme only passes the LLR and extrinsic information from the demapper and the decoder to \textit{other sub-blocks once} for each sub-block. At time $t'$, the estimated a priori probability of each coded bit in the sub-blocks of $\mathbf{c}_{t'-T_{\omega}}(\omega)$ is computed following Eq. (\ref{eq:prob}), where $T_{\omega}\neq 0$ representing the delay of the $\omega$-th sub-block for the codeword $\mathbf{c}_{t'-T_{\omega}}$. Then the undelayed sub-blocks and the delayed sub-blocks $\mathbf{c}_{t'-T_i(i)}$ with delay $T_i<T_{\omega}$ can be demodulated with the estimated probability of each bit in $\mathbf{c}_{t'-T_{\omega}}(\omega)$. The LLR output of the demapper for these undelayed and less delayed coded bits is given by 
	\begin{align} \label{eq:demo_with_ap}
	& L[c_{t'-T_i}^j(i)|y_{t'}^j,L_e[c_{t'-T_{\omega}}^j({\omega})]] \notag \\
	= & \ln\left(\dfrac{\sum\limits_{b=0}^{1}\sum\limits_{\{\hat{z}_{t'}^j\in\chi|\hat{c}_{t'-T_i}^j=0,\hat{c}_{t'-T_{\omega}}^j({\omega}) = b\}}e^{-\frac{\norm{{y}_{t'}^j-{\hat{z}_{t'}^j}}^2}{2\sigma^2}}P_{b}(c^j_{t'-T_{\omega}}({\omega}))}{\sum\limits_{b=0}^{1}\sum\limits_{\{\hat{z}_{t'}^j\in\chi|\hat{c}_{t'-T_i}^j=1,\hat{c}_{t'-T_{\omega}}^j({\omega})=b\}}e^{-\frac{\norm{{y}_{t'}^j-{\hat{z}_{t'}^j}}^2}{2\sigma^2}}P_{b}(c^j_{t'-T_{\omega}}({\omega}))}\right).
	\end{align}At time $t'$, with the estimated probability for sub-blocks of $\mathbf{c}_{t'-T_{\omega}}(\omega)$, the demapper effectively has a probability weighted signal constellation. Therefore, compared to the initial demapper output in Eq. (\ref{eq:demo_without_ap}), this refined demapper output will also be improved.
	
	Although DBICM is capable of improving the reliability in the demodulation, it inevitably creates a loss in spectrum efficiency due the vacant sub-blocks resulted from the delay. Use $T_n$ to represent the total number of time slots in a transmission frame in BICM, and the total number of time slots in a frame in DBICM is $T_n + T_{\text{max}}$. Then, the spectrum efficiency of the DBICM is 
	\begin{equation}\label{eq:spectrum_eff}
	\eta_{\text{DBICM}} = mR\left(\frac{T_n}{T_n + T_{\text{max}}}\right),
	\end{equation}
	where $mR$ is the spectrum efficiency of BICM. When the number of time slots $T_n$ in a transmission frame is large, the spectrum efficiency loss can be seen as negligible. In addition, following Eq. (\ref{eq:spectrum_eff}), to minimize the spectrum efficiency loss, we only consider that the sub-blocks of the codewords are delayed by no more than one time slot, i.e., $T_{\text{max}} = 1$, throughout the paper. But the analysis and designs presented in this paper can be easily generalized to other delay schemes. We emphasize that DBICM decoder outputs decoded codewords in a streaming fashion, allowing users to receive and decode files continuously. Now with $T_{\text{max}} = 1$, the decoder is able to receive and output a codeword for each time slot (by using the undelayed sub-blocks from the previous time slot). Hence, DBICM is suitable for video streaming applications as the increased decoding latency is very minor for $T_{\text{max}}=1$ and the loss of spectrum efficiency is negligible.
	
	\begin{figure}[h]
		\centering
		\includegraphics[width=0.45\textwidth]{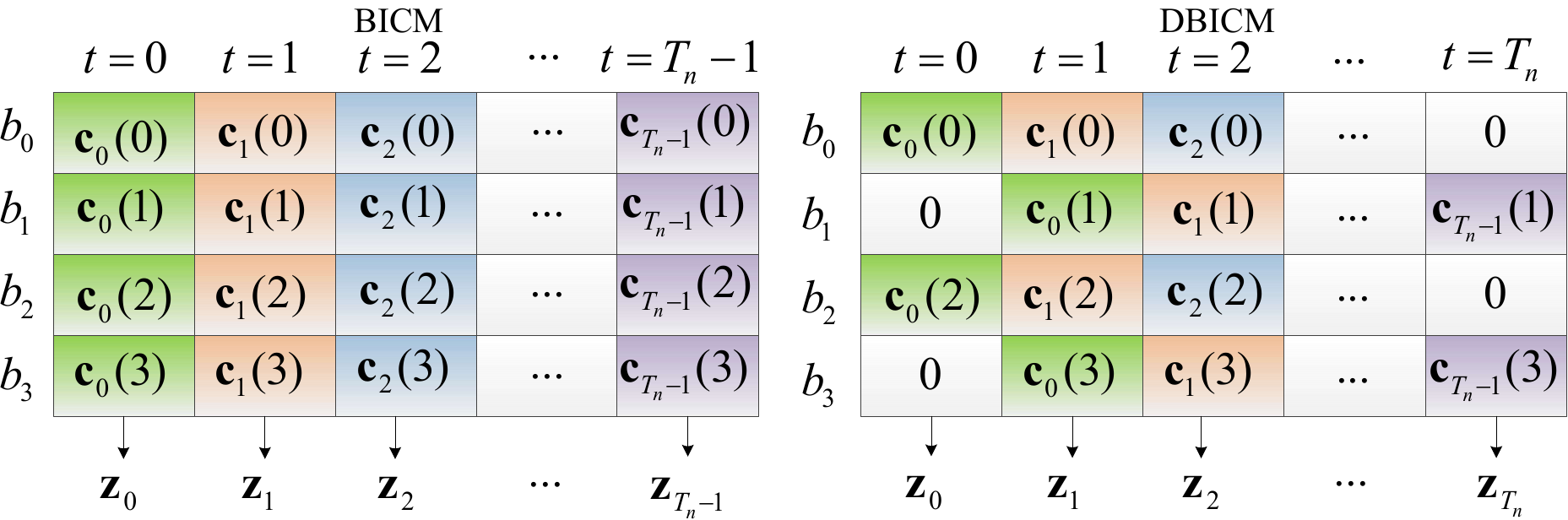}
		\caption{An example of $16$-QAM BICM and DBICM with a delay scheme $\mathbf{T}=[0,1,0,1]$, representing sub-blocks grouping for delays $T_0=0$, $T_1 = 1$, $T_2 = 0$, $T_3 = 1$.~~~~~~}
		\label{fig:delay_example}
	\end{figure}
	
	It is worth mentioning that the decoder of DBICM only decodes each codeword once, which is the same as that of BICM. Therefore, the decoding complexity of DBICM is equivalent as that of BICM. On the other hand, DBICM has higher detection complexity than BICM due to multiple detections on the undelayed sub-blocks. However, since we consider DBICM with $T_{\text{max}}=1$, only the undelayed sub-blocks are detected by one more time as compared with BICM. Therefore, the detection complexity for DBICM systems investigated in this paper is manageable.
	
	In Fig. \ref{fig:delay_example}, we use an example of $16$-QAM BICM and DBICM to demonstrate the difference between BICM and DBICM schemes. Let the $4$ bit-channels for $16$-QAM be represented by $b_0$, $b_1$, $b_2$ and $b_3$. In the BICM scheme, each codeword $\mathbf{c}_{t}$ is modulated to a sequence of signals $\mathbf{z}_t$ and transmitted at time slot $t$. On the other hand, for DBICM, sub-blocks $\mathbf{c}_{{t}-{T}_{i}}(i)$, $i\in \{0,...,m-1\}$, from multiple codewords are modulated to a sequence of signals $\mathbf{z}_t$ and transmitted at time slot $t$. Here, the DBICM system with delay scheme $\mathbf{T}=[0,1,0,1]$, indicating sub-blocks $\mathbf{c}_{t}(1)$ and $\mathbf{c}_{t}(3)$ are delayed by one time slot. Note that, the DBICM systems need $T_{\text{max}}$ more time slots to transmit the same number of codewords as that of BICM systems. Here, in Fig. \ref{fig:delay_example}, the DBICM scheme need $T_n + 1$ time slots to transmit $T_n$ codewords. Furthermore, in DBICM, each codeword $\mathbf{c}_{t}$ is divided into the delayed sub-blocks, $\mathbf{c}_t(1)$ with $\mathbf{c}_t(3)$, and the undelayed sub-blocks $\mathbf{c}_t(0)$ with $\mathbf{c}_t(2)$. Consequently, the delay also creates vacant places at time slot $t=0$ and $t=T_n$. We fill these vacant places as all-zero, and they are known by both the transmitter and the receiver sides. For some applications such as optical communications and video streaming, where the transmission frame $T_n$ is usually large compared to $T_{\text{max}}=1$, the DBICM systems investigated in this paper have a negligible loss of spectrum efficiency over BICM according to Eq. (\ref{eq:spectrum_eff}).
	
	It has been shown in \cite{DBICM_cap} that iterative detection and decoding is possible to further improve the demapping and decoding performance for DBICM at the price of latency and complexity. However, for practical consideration, this paper focuses on DBICM without iterative processing.		
	
	\section{Capacity analysis for DBICM}
	
	In this section, we discuss the capacity of the BICM and DBICM schemes. We also provide essential definitions that will be useful in the later analysis. Let us consider a system with a memory-less discrete-input and continuous-output channel, with input $\mathbf{z}$ and output $\mathbf{y}$. For simplicity, we drop the subscript and superscript of symbols $\mathbf{z}$ and $\mathbf{y}$ with slightly abused notations. 
	
	For a BICM scheme with the ideal interleaving assumption, all bit-channels are independent \cite{bicm}. In this case, the $i$-th bit-channel capacity, note as $C_{i,\text{BICM}}$, can be written as the mutual information between the input $b$ and its channel output $\mathbf{y}$, as shown in \cite{bicm} 
	\begin{align}
	\label{eq:C_bicm_bit}
	\qquad C_{i,\text{BICM}} = 1-\mathop{\mathbb{E}}_{b,\mathbf{y}} \left [\log_2\dfrac{\sum_{\mathbf{z} \in \chi}p(\mathbf{y}|\mathbf{z})}{\sum_{\mathbf{z} \in \chi^{i}_b}p(\mathbf{y}|\mathbf{z})} \right],
	\end{align} 
	where $\chi^{i}_b$ denotes the subset of all the signals $\mathbf{z}\in\chi$ whose $i$-th bit being $b\in\{0,1\}$, and \begin{equation} \label{eq:p}
	p(\mathbf{y}|\mathbf{z}) = e^{-\frac{\norm{\mathbf{y}-\mathbf{z}}^2}{2\sigma^2}}
	\end{equation} 
	is the channel conditional probability density function, while $\sigma^2$ being the noise variance per real dimension. Then, the capacity of the BICM scheme can be computed as \cite{bicm}
	\begin{equation}
	\label{eq:C_bicm_overall}
	C_{\text{BICM}} = \sum\limits_{i=0}^{m-1}C_{i,\text{BICM}} = m-\sum\limits_{i=0}^{m-1}\mathop{\mathbb{E}}_{b,\mathbf{y}} \left [\log_2\dfrac{\sum_{\mathbf{z} \in \chi}p(\mathbf{y}|\mathbf{z})}{\sum_{\mathbf{z} \in \chi^{i}_b}p(\mathbf{y}|\mathbf{z})} \right].
	\end{equation} 
	Now, we introduce the following definition related to the symmetric property of Gray labeled square $M$-QAM BICM systems.
	\begin{definition}\label{def:mcbpp}
		A pair of bits, $j$ and $j' \in \{1,...,m\}$, and $j \neq j'$, in the Gray labeled $M$-QAM are said to be symmetric bits if they have identical bit-channel capacities, such that
		\begin{equation} \label{eq:bj_bicm}
		C_{j,\text{BICM}}=C_{{j'},\text{BICM}}.
		\end{equation}
	\end{definition} 
	For simplicity, we consider decomposable Gray labeled $M$-QAM with square signal constellations which are the Cartesian product of the real and complex Gray labeled $\sqrt{M}$-PAM constellations. To be specific, in the Gray labeled $M$-QAM, the first $m/2$ bits with labels $\mathcal{A} = \{0,...,\frac{m}{2} - 1\}$ are mapped to the real part of the constellation, while the rest of bits with labels $\mathcal{B} = \{\frac{m}{2}, ..., m - 1\}$ are mapped to the imaginary part of the constellation. As a result, the BICM scheme has symmetric bit-channel capacities between set $\mathcal{A}$ and $\mathcal{B}$, since the real and imaginary parts are independent and symmetric. Recall Definition \ref{def:mcbpp}, each labeled bit in $\mathcal{A}$ has a symmetric bit in $\mathcal{B}$, and vice versa. 
	
	For a DBICM scheme, the bit-channel capacity of an undelayed coded bit is conditioned on the decoded information of the delayed sub-blocks under the delay scheme $\mathbf{T}$. Let us denote by $\mathcal{D}=\{i|T_i\neq 0\}$ and $\mathcal{\tilde{D}}=\{i|T_i = 0\}$ the collection of coded bit labels for the corresponding delayed and undelayed sub-blocks $\mathbf{c}_{t-T_i}(i)$, respectively. Let $\mathbf{b}_{\mathcal{D}}\in\mathbb{F}_{2}^{\abs{\mathcal{D}}}$ be a realization of the delayed coded bits, i.e., each delayed coded bit has a value of either 0 or 1. Then, under the delay scheme $\mathbf{T}$, the DBICM bit-channel capacity of an undelayed coded bit with label $k\in\mathcal{\tilde{D}}$, $C_{k,\text{DBICM}}^{\mathbf{T}}$, is given by \cite{DBICM_cap} 
	\begin{align}
		\label{eq:C_dbicm_bit}
		&C_{k,\text{DBICM}}^{\mathbf{T}} = I(b;\mathbf{y}|\mathcal{D}) \notag \\
		= & 1 - \frac{1}{2^{\abs{\mathcal{D}}}}\mathlarger{\sum_{\mathbf{b}_{\mathcal{D}}\in\mathbb{F}_{2}^{\abs{\mathcal{D}}}}}\mathop{\mathbb{E}}_{b,\mathbf{y}|\mathcal{D}} \left [\!\log_2\dfrac{\sum_{\mathbf{z} \in \chi^{\mathcal{D}}_{{\mathbf{b}_{\mathcal{D}}}}}p(\mathbf{y}|\mathbf{z})}{\sum_{\mathbf{z} \in \chi^{k,\mathcal{D}}_{b,{\mathbf{b}_{\mathcal{D}}}}}p(\mathbf{y}|\mathbf{z})}\! \right],
	\end{align} 
	where $\chi^{\mathcal{D}}_{{\mathbf{b}_{\mathcal{D}}}}\subseteq \chi$ represents the set of constellation points where the label of the delayed coded bits being $\mathbf{b}_{\mathcal{D}}$, $\chi^{k,\mathcal{D}}_{b,{\mathbf{b}_{\mathcal{D}}}} \subseteq \chi^{\mathcal{D}}_{{\mathbf{b}_{\mathcal{D}}}}$ is the collection of constellation points where the label of the delayed coded bits being $\mathbf{b}_{\mathcal{D}}$ and the label of the $k$-th undelayed coded bit being $b$. To show the relationship among $\chi$, $\chi^{\mathcal{D}}_{{\mathbf{b}_{\mathcal{D}}}}$ and $\chi^{k,\mathcal{D}}_{b,{\mathbf{b}_{\mathcal{D}}}}$, we give an example as follows.
	
	\begin{example}
		Consider a Gray labeled $64$-QAM DBICM system with  $\mathbf{T}=[0,0,0,0,1,1]$ corresponds to $\mathcal{D}=\{4,5\}$. There are $2^{\abs{\mathcal{D}}}=4$ different $\mathbf{b}_{\mathcal{D}}$, i.e., $[0,0], [0,1], [1,0]$, and $[1,1]$, and each $\mathbf{b}_{\mathcal{D}}$ creates partitions in the constellation. We use four different markers, as indicated in Fig. \ref{fig:gray_64_const_sub_group_a}, to represent the constellation points of $\chi^{\mathcal{D}}_{{\mathbf{b}_{\mathcal{D}}}}$ under four different realizations of $\mathbf{b}_{\mathcal{D}}$.
		
		To investigate the term inside the expectation of Eq. (\ref{eq:C_dbicm_bit}) under the delay scheme $\mathbf{T}$, it is necessary to look into the subset  $\chi^{k,\mathcal{D}}_{b,{\mathbf{b}_{\mathcal{D}}}}$ regarding the undelayed bit with label $k$ being $b$ and the delayed coded bits $\mathcal{D}$ being ${\mathbf{b}_{\mathcal{D}}}$. Take the third undelayed bit ($k=2$) as an example. In Fig. \ref{fig:gray_64_const_sub_group_b}, we use red and blue colors to represent the constellation points where the label of the third coded bit being $0$ and $1$, respectively. Thus, different colors and markers visualize $8$ subsets $\chi^{k,\mathcal{D}}_{b,{\mathbf{b}_{\mathcal{D}}}}$, as shown in Fig. \ref{fig:gray_64_const_sub_group_b}. 
		\begin{figure}[h]
			\centering
			\subfigure[${\mathbf{b}_{\mathcal{D}}}$ partition $\chi$ into subsets $\chi^{\mathcal{D}}_{{\mathbf{b}_{\mathcal{D}}}}$.]
			{
				\includegraphics[width=0.45\textwidth]{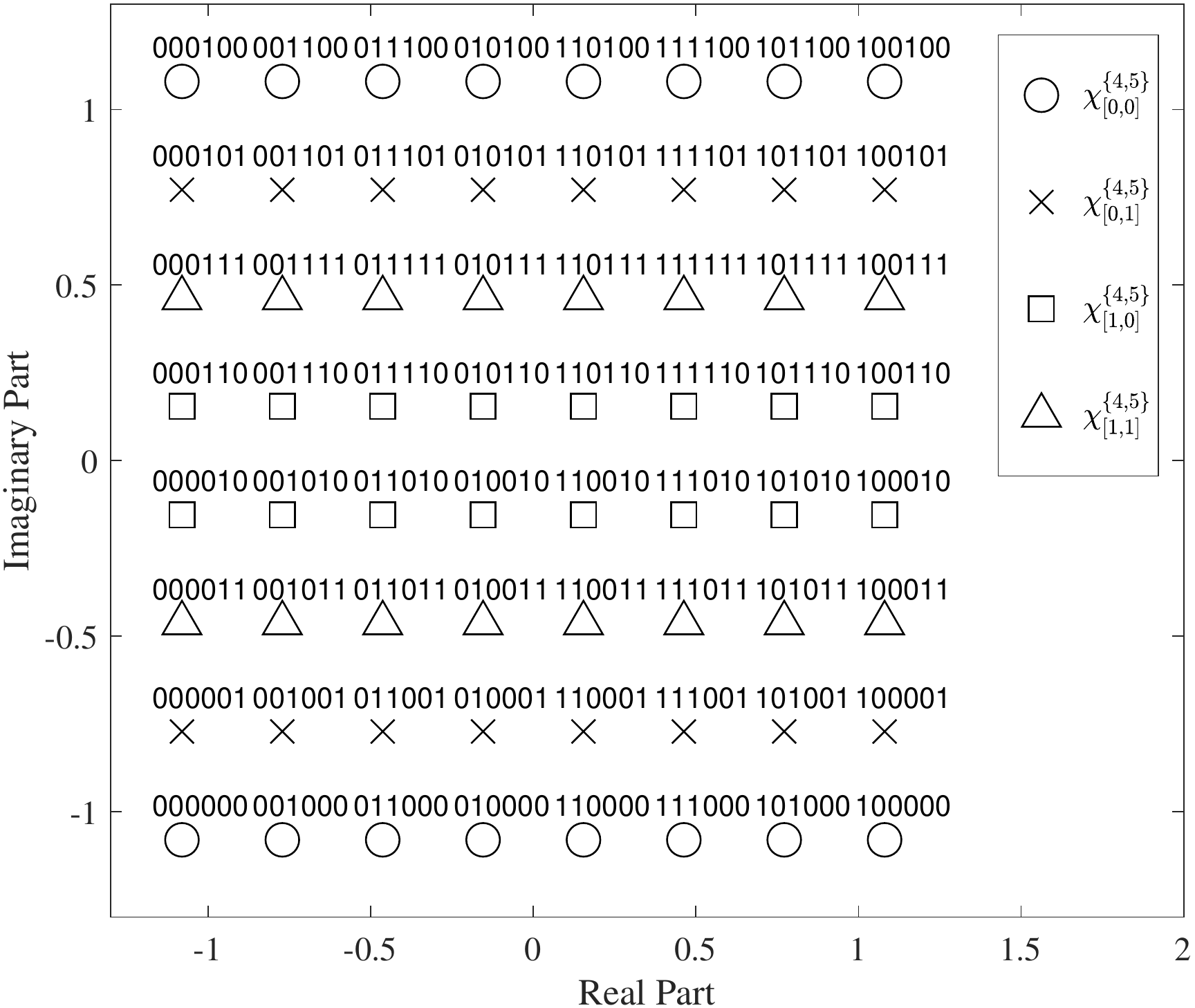}
				\label{fig:gray_64_const_sub_group_a}
			}
			\subfigure[$b$ and ${\mathbf{b}_{\mathcal{D}}}$ partition $\chi$ into subsets $\chi^{k,\mathcal{D}}_{b,{\mathbf{b}_{\mathcal{D}}}}$.]
			{
				\includegraphics[width=0.45\textwidth]{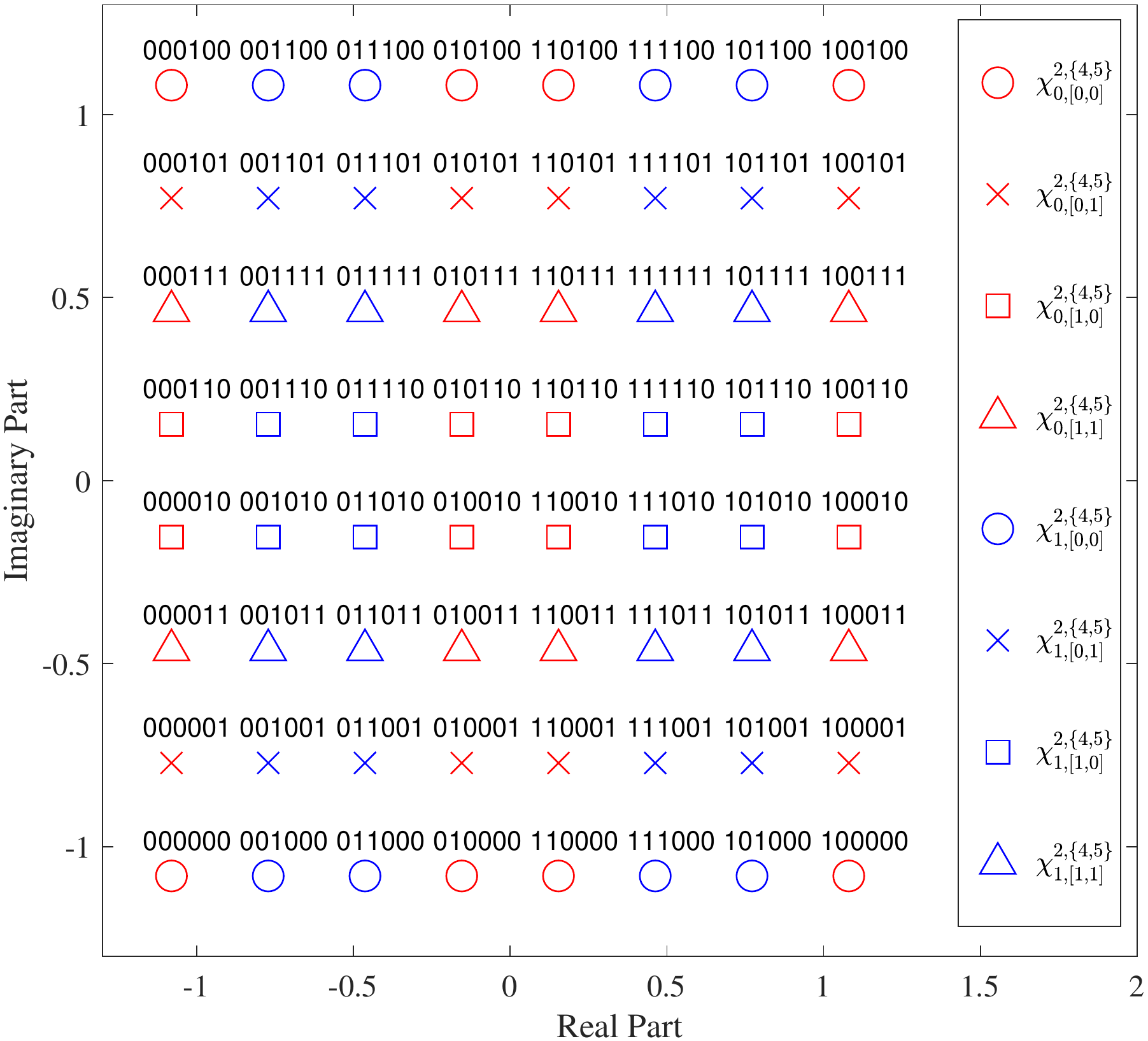}
				\label{fig:gray_64_const_sub_group_b}
			}
			\caption{An example of DBICM constellation partition.}
			\label{fig:sample_subfigures}
		\end{figure}
	\end{example} 
	
	For the delayed coded bits, their bit-channel capacities are unchanged from BICM. In this way, the DBICM capacity for the delay scheme $\mathbf{T}$, note as $C_{\text{DBICM}}^{\mathbf{T}}$, can be represented as the sum of the delayed and undelayed coded bits' capacities, calculated separately from Eq. (\ref{eq:C_bicm_bit}) and Eq. (\ref{eq:C_dbicm_bit}), which is given by 
	\begin{align}
	\label{eq:C_dbicm_overall}
	C_{\text{DBICM}}^{\mathbf{T}} = \sum\limits_{j\in\mathcal{D}}C_{j,\text{BICM}} + \sum\limits_{k\in\tilde{\mathcal{D}}}C_{k,\text{DBICM}}^{\mathbf{T}},
	\end{align} 
	where $j$ and $k$ label the delayed and undelayed coded bits, separately. Since the DBICM capacity depends on delay schemes, the design of delay schemes is an important issue in DBICM, which will be discussed in the next section.
	
	\section{Design of DBICM delay schemes} 
	In this section, we first show several characteristics of DBICM capacity over AWGN channels. We show the comparison of spectral efficiency of Gray labeled $M$-QAM DBICM systems with different delay schemes and their underlying $\sqrt{M}$-PAM DBICM systems. Then, we determine the optimal delay scheme under a fixed $T_{\text{max}}$ that allows Gray labeled $M$-QAM DBICM to achieve a target spectrum efficiency with the lowest SNR. We emphasize that randomly choosing the delay scheme may result in negligible improvement from BICM capacity. Therefore, finding the optimal delay scheme is crucial for obtaining the maximum capacity gain of DBICM over BICM. 
	
	First, we note that when $T_{\text{max}}=m-1$, where $m=\log_2M$ is the modulation level, the delay scheme $\mathbf{T}=[0,1,\cdots,m-1]$ and its permutation, achieve the constellation constrained capacity, regardless of constellations, labeling, and SNR, as shown in Appendix \ref{app:added_proof}. However, to construct such a capacity-achieving delay scheme, the maximum number of delayed time slots $T_{\text{max}}$ needs to increase with the constellation size. As a result, a large $T_{\text{max}}$ could introduce high decoding latency and complexity as the decoding of each codeword requires to detect the signals received during $T_{\text{max}}+1$ time slots. Since we restrict ourselves with $T_{\text{max}}=1$ to minimize the spectrum efficiency loss, we need to find the optimal delay scheme under this condition that achieves a target spectrum efficiency with the lowest SNR. 

	\subsection{Properties of the Delay Schemes}
	In this section, we present two important properties and a proposition for the capacity of the DBICM schemes. First, the bit-channel capacities of a Gray labeled $M$-QAM DBICM system can be independently considered via two $\sqrt{M}$-PAM DBICM systems. We now formulate this independent property as the following theorem.

	\begin{thm} \label{thm:superposition}
		Let the delay scheme $\mathbf{T} = [T_i]_{i=0}^{m-1} = [\mathbf{T}_{\mathcal{A}}, \mathbf{T}_{\mathcal{B}}]$, $\mathbf{T}_{\mathcal{A}} = [T_{i}]_{i \in \mathcal{A}}$, $\mathbf{T}_{\mathcal{B}} =  [T_{i}]_{i \in \mathcal{B}}$, $\mathcal{A}=\{0,1,...,\frac{m}{2}-1\}$ and $\mathcal{B}=\{\frac{m}{2}, \frac{m}{2} + 1,...,m-1\}$. For any $\mathbf{T}_{\mathcal{A}}$, $\mathbf{T}_{\mathcal{B}}$, and $\mathbf{T'}_{\mathcal{B}}$, where $\mathbf{T}_{\mathcal{B}}\neq\mathbf{T'}_{\mathcal{B}}$, we have the $k$-th undelayed bit-channel capacity, when $k\in \mathcal{A}$, as
		\begin{align}\label{eq:thm_1_1}
		&C^{[\mathbf{T}_{\mathcal{A}},\mathbf{T}_{\mathcal{B}}]}_{k,\text{DBICM}}(M\text{-QAM})=C^{[\mathbf{T}_{\mathcal{A}},\mathbf{T'}_{\mathcal{B}}]}_{k,\text{DBICM}}(M\text{-QAM}) \notag \\
		= & C^{\mathbf{T}_{\mathcal{A}}}_{k,\text{DBICM}}(\sqrt{M}\text{-PAM}).
		\end{align}Alternatively, for any $\mathbf{T}_{\mathcal{A}}$, $\mathbf{T'}_{\mathcal{A}}$, and $\mathbf{T}_{\mathcal{B}}$, where $\mathbf{T}_{\mathcal{A}}\neq\mathbf{T'}_{\mathcal{A}}$, we have the $k$-th undelayed bit-channel capacity, when $k\in \mathcal{B}$, as
		\begin{align}\label{eq:thm_1_2}
		&C^{[\mathbf{T}_{\mathcal{A}},\mathbf{T}_{\mathcal{B}}]}_{k,\text{DBICM}}(M\text{-QAM})=C^{[\mathbf{T'}_{\mathcal{A}},\mathbf{T}_{\mathcal{B}}]}_{k,\text{DBICM}}(M\text{-QAM}) \notag \\
		= & C^{\mathbf{T}_{\mathcal{B}}}_{(k-\frac{m}{2}),\text{DBICM}}(\sqrt{M}\text{-PAM}).
		\end{align}
	\end{thm}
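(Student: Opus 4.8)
The plan is to exploit the Cartesian-product structure of the decomposable Gray-labeled square $M$-QAM constellation. I would write $\chi = \chi_R \times \chi_I$, where $\chi_R$ and $\chi_I$ are Gray-labeled $\sqrt{M}$-PAM constellations labeled respectively by the bits in $\mathcal{A}$ and in $\mathcal{B}$, and decompose each signal as $\mathbf{z} = (z_R, z_I)$ and each observation as $\mathbf{y} = (y_R, y_I)$. Since the AWGN noise is independent across the real and imaginary dimensions, the channel law factors as $p(\mathbf{y}|\mathbf{z}) = p(y_R|z_R)\,p(y_I|z_I)$ with $p(\cdot|\cdot)$ the Gaussian kernel of Eq.~\eqref{eq:p}. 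Accordingly I would split the delayed-bit index set as $\mathcal{D} = \mathcal{D}_{\mathcal{A}} \cup \mathcal{D}_{\mathcal{B}}$ (disjoint), where $\mathcal{D}_{\mathcal{A}} = \{i \in \mathcal{A} : T_i \neq 0\}$ is determined by $\mathbf{T}_{\mathcal{A}}$ alone and $\mathcal{D}_{\mathcal{B}} = \{i \in \mathcal{B} : T_i \neq 0\}$ by $\mathbf{T}_{\mathcal{B}}$ alone, and write any realization as $\mathbf{b}_{\mathcal{D}} = (\mathbf{b}_{\mathcal{D}_{\mathcal{A}}}, \mathbf{b}_{\mathcal{D}_{\mathcal{B}}})$.

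Next I would establish the subset factorizations. Fix $k \in \mathcal{A}$. Because the label of the $k$-th bit and of every bit in $\mathcal{D}_{\mathcal{A}}$ constrains only the real component, while the bits in $\mathcal{D}_{\mathcal{B}}$ constrain only the imaginary component, the decomposability of the labeling gives $\chi^{\mathcal{D}}_{\mathbf{b}_{\mathcal{D}}} = \psi \times \phi$ and $\chi^{k,\mathcal{D}}_{b,\mathbf{b}_{\mathcal{D}}} = \psi' \times \phi$, where $\psi, \psi' \subseteq \chi_R$ are the PAM subsets selected by $\mathbf{b}_{\mathcal{D}_{\mathcal{A}}}$ and by $(b, \mathbf{b}_{\mathcal{D}_{\mathcal{A}}})$ respectively, and $\phi \subseteq \chi_I$ is selected by $\mathbf{b}_{\mathcal{D}_{\mathcal{B}}}$. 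Substituting into Eq.~\eqref{eq:C_dbicm_bit} and using the factored channel law, each sum over $\mathbf{z}$ becomes the product of a sum over $z_R$ and a sum over $z_I$; the imaginary-part sum $\sum_{z_I \in \phi} p(y_I|z_I)$ is common to numerator and denominator and cancels in the ratio, leaving $\log_2\frac{\sum_{z_R \in \psi} p(y_R|z_R)}{\sum_{z_R \in \psi'} p(y_R|z_R)}$, a quantity that depends only on $y_R$, $b$, $\mathbf{b}_{\mathcal{D}_{\mathcal{A}}}$ and the real-part structure induced by $\mathbf{T}_{\mathcal{A}}$.

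It then remains to collapse the outer averaging. Conditioned on $b$ and $\mathbf{b}_{\mathcal{D}}$, the transmitted point is uniform on $\chi^{k,\mathcal{D}}_{b,\mathbf{b}_{\mathcal{D}}} = \psi' \times \phi$, which by the product structure makes $z_R$ and $z_I$ independent with $z_R$ uniform on $\psi'$, and the noise is independent across dimensions; hence $\mathbb{E}_{b,\mathbf{y}|\mathcal{D}}$ factors into a real-part expectation and an imaginary-part expectation. Since the integrand is independent of $(z_I, y_I, \mathbf{b}_{\mathcal{D}_{\mathcal{B}}})$, the imaginary-part expectation equals $1$, and $\frac{1}{2^{\abs{\mathcal{D}_{\mathcal{B}}}}}\sum_{\mathbf{b}_{\mathcal{D}_{\mathcal{B}}}}$ of a $\mathbf{b}_{\mathcal{D}_{\mathcal{B}}}$-independent quantity is also $1$. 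What survives is exactly the right-hand side of Eq.~\eqref{eq:C_dbicm_bit} instantiated for the $\sqrt{M}$-PAM constellation $\chi_R$ with delayed set $\mathcal{D}_{\mathcal{A}}$ and delay scheme $\mathbf{T}_{\mathcal{A}}$, i.e. $C^{\mathbf{T}_{\mathcal{A}}}_{k,\text{DBICM}}(\sqrt{M}\text{-PAM})$. This proves the second equality in Eq.~\eqref{eq:thm_1_1}; the first equality is then immediate because the surviving expression never involves $\mathbf{T}_{\mathcal{B}}$. The case $k \in \mathcal{B}$ follows verbatim after interchanging the real and imaginary roles, giving Eq.~\eqref{eq:thm_1_2}.

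I expect the main obstacle to be the rigorous verification of the subset factorizations together with the precise meaning of the conditional expectation $\mathbb{E}_{b,\mathbf{y}|\mathcal{D}}$: one must confirm that conditioning on the delayed bits and on the $k$-th bit induces a genuine product measure on $(z_R, z_I)$, which is precisely where the decomposability and the Gray structure of the square QAM are essential. Once that is in place, the manipulation of the sums and logarithms, and the cancellation of the $2^{\abs{\mathcal{D}_{\mathcal{B}}}}$ factors, is routine.
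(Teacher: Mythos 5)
Your argument is correct and rests on the same structural fact as the paper's proof --- the Cartesian-product (real/imaginary) separability of decomposable Gray-labeled square QAM together with the factorization of the Gaussian kernel --- but you organize it differently and more economically. The paper first proves that the log-ratio in Eq.~(\ref{eq:C_dbicm_bit}) is unchanged when the real-part delay scheme is altered, by forming the difference of the two ratios in Eq.~(\ref{eq:comp_gen}) and cancelling the cross terms via the subset identities (\ref{eq:complex_presentation_general}) and the distance identity (\ref{eq:dis_sqrt_eq_gen}); it then sets $\mathcal{D_A}=\emptyset$ to collapse the average over $\mathbf{b}_{\mathcal{D_A}}$ in Eq.~(\ref{eq:appa_mid_step}), and only afterwards invokes the product structure in steps (b)--(c) of Eq.~(\ref{eq:newD1}) to identify the result with the $\sqrt{M}$-PAM capacity. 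You instead factor the conditioning subsets as products $\psi\times\phi$ and $\psi'\times\phi$ from the outset, so the imaginary-part factor cancels inside the logarithm immediately; both the independence from $\mathbf{T}_{\mathcal{B}}$ (the first equality of Eq.~(\ref{eq:thm_1_1})) and the reduction to the PAM formula (the second equality) then fall out of a single computation, with no need for the pairwise difference argument or the $\mathcal{D_A}=\emptyset$ device. The one point to state explicitly when writing this up is the measure-theoretic half of the collapse: conditioned on $(b,\mathbf{b}_{\mathcal{D}})$ the transmitted point is uniform on $\psi'\times\phi$, so $(z_R,n_R)$ and $(z_I,n_I)$ are independent and the marginal law of $y_R$ is exactly that of the PAM output conditioned on $(b,\mathbf{b}_{\mathcal{D_A}})$; the phrase ``the imaginary-part expectation equals $1$'' should be read as integrating a constant against a probability measure of total mass one. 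With that made precise, your proof is complete and, if anything, cleaner than the one in Appendix~\ref{pf:proof_superposition}.
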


	\begin{proof}
		See Appendix \ref{pf:proof_superposition}.
	\end{proof}

	Theorem \ref{thm:superposition} shows that $\mathbf{T}_{\mathcal{A}}$ and $\mathbf{T}_{\mathcal{B}}$ only affect the bit-channel capacities for coded bits in group $\mathcal{A}$ and $\mathcal{B}$ independently. In other words, the bit-channel capacity for any real part bit in $\mathcal{A}$ is only affected by the delay scheme of other real part bits and it is not affected by the imaginary bits, whether they are delayed or not delayed. The same is true for the bit-channel capacity of any imaginary part bit in $\mathcal{B}$. For example, consider two Gray labeled $64$-QAM DBICM systems with $\mathbf{T}=[1,0,0,0,0,1]$ and $\mathbf{T}=[1,0,0,1,0,1]$ and a Gray labeled $8$-PAM DBICM system with $\mathbf{T}=[1,0,0]$. Their second bit share identical bit-channel capacity as shown in Fig. \ref{fig:PAM_QAM_same}. 
	\begin{figure} [h]
		\centering
		\includegraphics[width=0.45\textwidth]{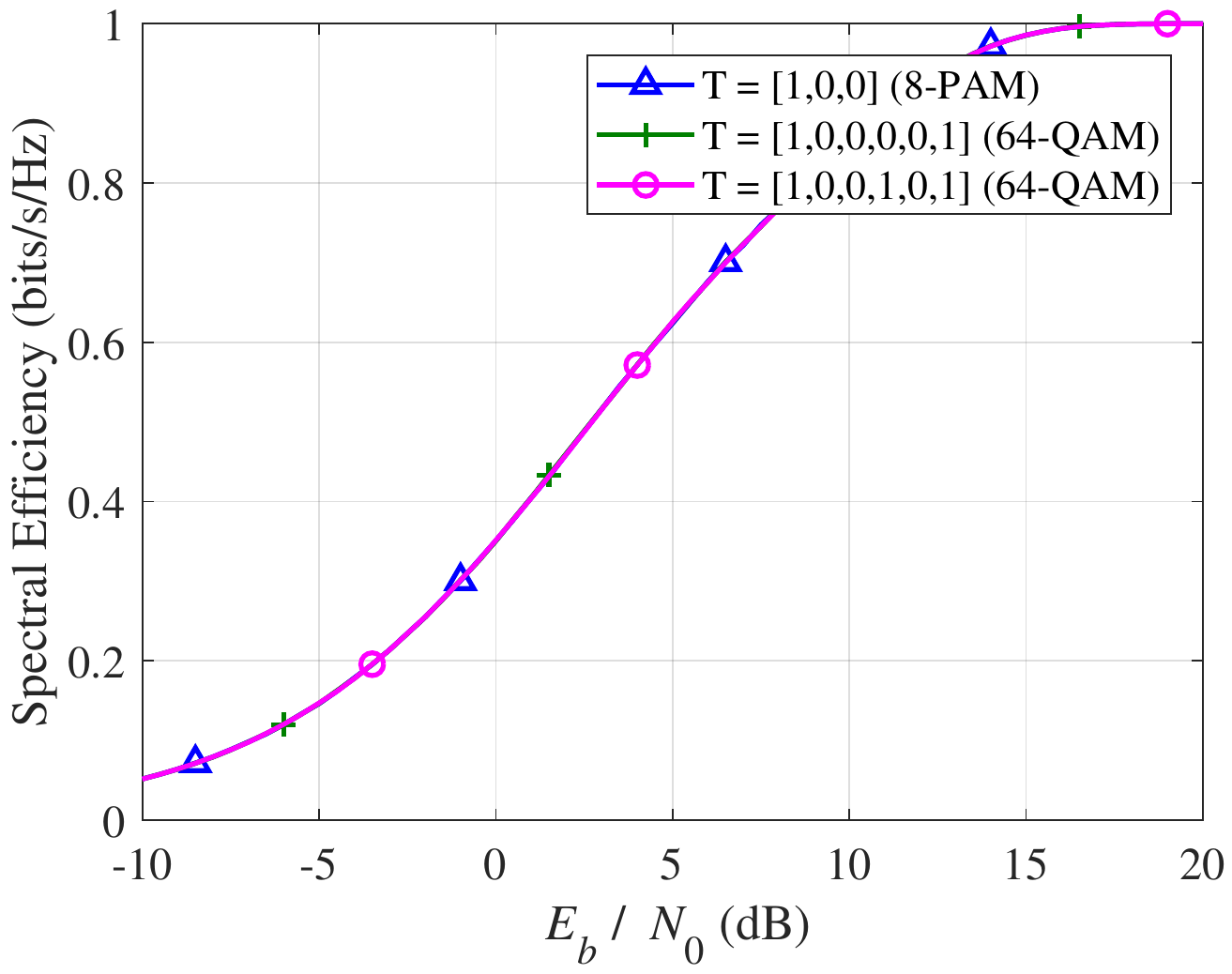}
		\caption{Bit capacity comparison for the $2$nd bit in $8$-PAM DBICM and $64$-QAM DBICM, while the first three bits share identical delay schemes in all three DBICM systems.} 
		\label{fig:PAM_QAM_same}
	\end{figure}	
	
	Furthermore, Theorem \ref{thm:superposition} also indicates that any coded bit in an $M$-QAM DBICM with delay scheme $\mathbf{T}=[\mathbf{T}_{\mathcal{A}},\mathbf{T}_{\mathcal{B}}]$ has identical bit-channel capacities as its corresponding coded bit in $\sqrt{M}$-PAM DBICM with delay scheme $\mathbf{T}_{\mathcal{A}}$ or $\mathbf{T}_{\mathcal{B}}$. Therefore, the bit-channel capacities in an $M$-QAM DBICM can be computed from two $\sqrt{M}$-PAM DBICM. In addition, when none of the coded bits in group $\mathcal{A}$ or $\mathcal{B}$ are delayed, we have the following corollary for the bit-channel capacities.
	\begin{cor}\label{cor:from_thm1}
		With none of the coded bits in group $\mathcal{A}$ being delayed, the $k$-th bit-channel capacity, when $k\in\mathcal{A}$, can be expressed as
		\begin{align}
		&C^{[\mathbf{0}_{1,\frac{m}{2}},\mathbf{T}_{\mathcal{B}}]}_{k,\text{DBICM}}(M\text{-QAM}) = C_{k,\text{BICM}}(M\text{-QAM}) \notag \\
		= & C_{k,\text{BICM}}(\sqrt{M}\text{-PAM}).
		\end{align}
		Alternatively, with none of the coded bits in group $\mathcal{B}$ being delayed, the $k$-th bit-channel capacity, when $k\in\mathcal{B}$, can be expressed as
		\begin{align}
		&C^{[\mathbf{T}_{\mathcal{A}}, \mathbf{0}_{1,\frac{m}{2}}]}_{k,\text{DBICM}}(M\text{-QAM}) =  C_{k,\text{BICM}}(M\text{-QAM}) \notag \\
		= & C_{(k-\frac{m}{2}),\text{BICM}}(\sqrt{M}\text{-PAM}).
		\end{align}
	\end{cor}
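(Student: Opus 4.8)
The plan is to read the corollary off Theorem~\ref{thm:superposition} by specializing the real-part delay vector to the all-zero vector, and then to observe that a DBICM bit-channel whose delayed set is empty is literally a BICM bit-channel. Concretely, I would first set $\mathbf{T}_{\mathcal{A}} = \mathbf{0}_{1,\frac{m}{2}}$ in Eq.~(\ref{eq:thm_1_1}). Theorem~\ref{thm:superposition} then gives, for every $k\in\mathcal{A}$ and every $\mathbf{T}_{\mathcal{B}}$,
\begin{equation}
C^{[\mathbf{0}_{1,\frac{m}{2}},\mathbf{T}_{\mathcal{B}}]}_{k,\text{DBICM}}(M\text{-QAM}) = C^{\mathbf{0}_{1,\frac{m}{2}}}_{k,\text{DBICM}}(\sqrt{M}\text{-PAM}),
\end{equation}
so the remaining work is to identify the right-hand side with $C_{k,\text{BICM}}(\sqrt{M}\text{-PAM})$ and, in turn, with $C_{k,\text{BICM}}(M\text{-QAM})$.

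For the first identification I would simply track the degenerate case $\mathcal{D}=\emptyset$ in Eq.~(\ref{eq:C_dbicm_bit}). When the $\sqrt{M}$-PAM delay scheme is all-zero, $\abs{\mathcal{D}}=0$, so the prefactor $\tfrac{1}{2^{\abs{\mathcal{D}}}}$ equals $1$, the sum over $\mathbf{b}_{\mathcal{D}}\in\mathbb{F}_2^{0}$ contains a single (empty) term, the conditioning on $\mathcal{D}$ is vacuous, and the constellation subsets collapse to $\chi^{\mathcal{D}}_{\mathbf{b}_{\mathcal{D}}}=\chi$ and $\chi^{k,\mathcal{D}}_{b,\mathbf{b}_{\mathcal{D}}}=\chi^{k}_{b}$. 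Substituting these into Eq.~(\ref{eq:C_dbicm_bit}) reproduces Eq.~(\ref{eq:C_bicm_bit}) verbatim, which yields $C^{\mathbf{0}_{1,\frac{m}{2}}}_{k,\text{DBICM}}(\sqrt{M}\text{-PAM}) = C_{k,\text{BICM}}(\sqrt{M}\text{-PAM})$.

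The second identification is exactly the decomposability observation stated just before Definition~\ref{def:mcbpp}: since the Gray-labeled square $M$-QAM is the Cartesian product of two independent, identically distributed $\sqrt{M}$-PAM constellations and the label bits in $\mathcal{A}$ are carried entirely by the real component (whose noise is independent of the imaginary component), the plain BICM mutual information $C_{k,\text{BICM}}(M\text{-QAM})$ for $k\in\mathcal{A}$ equals $C_{k,\text{BICM}}(\sqrt{M}\text{-PAM})$. Chaining the three equalities proves the first part of the corollary, and the second part follows symmetrically by instead setting $\mathbf{T}_{\mathcal{B}}=\mathbf{0}_{1,\frac{m}{2}}$ in Eq.~(\ref{eq:thm_1_2}) and re-indexing by $k-\tfrac{m}{2}$. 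I do not expect a genuine obstacle here; the only delicate point is the bookkeeping in the empty-delay limit, namely making sure the empty-sum and unit-prefactor conventions in Eq.~(\ref{eq:C_dbicm_bit}) collapse cleanly onto Eq.~(\ref{eq:C_bicm_bit}) and that the $M$-QAM-to-$\sqrt{M}$-PAM reduction of the ordinary BICM bit capacities is invoked consistently with the decomposability assumption.
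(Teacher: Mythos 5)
Your proposal is correct and follows exactly the route the paper intends: the corollary is stated as an immediate consequence of Theorem~\ref{thm:superposition}, obtained by specializing the real-part (resp.\ imaginary-part) delay vector to $\mathbf{0}_{1,\frac{m}{2}}$, observing that Eq.~(\ref{eq:C_dbicm_bit}) with $\mathcal{D}=\emptyset$ collapses to Eq.~(\ref{eq:C_bicm_bit}), and invoking the Gray-labeled QAM-to-PAM decomposability already noted before Definition~\ref{def:mcbpp}. The paper gives no separate proof, and your bookkeeping of the empty-delay degenerate case fills in the only detail it leaves implicit.
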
Corollary \ref{cor:from_thm1} guarantees that to obtain bit-channel capacity improvements from coded bits in both group $\mathcal{A}$ and $\mathcal{B}$, one needs to delay at least one coded bit from each group. Since symmetric bits result in identical bit-channel capacities in uniform Gray labeled $M$-QAM BICM, it is worthwhile to investigate whether delaying symmetric bits in uniform Gray labeled $M$-QAM DBICM also result in the same capacity. In the following, we first define the symmetric delay schemes, then establish a theorem on the property of DBICM symmetric delay schemes.
	\begin{definition}\label{def:symmetric_schemes}
			Let $\mathbf{T}=[\mathbf{T}_{\mathcal{A}}, \mathbf{T}_{\mathcal{B}}]$ and $\mathbf{T'}=[\mathbf{T}_{\mathcal{B}}, \mathbf{T}_{\mathcal{A}}]$. We call this pair of delay schemes $\mathbf{T}$ and $\mathbf{T'}$, in uniform Gray labeled $M$-QAM DBICM systems, symmetric delay schemes.
	\end{definition}

	\begin{thm} \label{thm:D_D'_symmetric}
		The capacities of two uniform Gray labeled $M$-QAM DBICM systems under symmetric delay schemes which are defined in Definition \ref{def:symmetric_schemes} are the same. That is,
		\begin{equation}
		C^{\mathbf{T}}_{\text{DBICM}}(M\text{-QAM}) = C^{\mathbf{T}'}_{\text{DBICM}}(M\text{-QAM}).
		\end{equation}
	\end{thm}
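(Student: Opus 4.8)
The plan is to use Theorem~\ref{thm:superposition} to reduce the $M$-QAM DBICM capacity under $\mathbf{T}=[\mathbf{T}_{\mathcal{A}},\mathbf{T}_{\mathcal{B}}]$ to a sum of two underlying $\sqrt{M}$-PAM DBICM capacities, one governed only by $\mathbf{T}_{\mathcal{A}}$ and one governed only by $\mathbf{T}_{\mathcal{B}}$; the statement then follows from commutativity of addition. First I would take Eq.~(\ref{eq:C_dbicm_overall}) and split the two sums according to the partition of the bit labels into $\mathcal{A}$ and $\mathcal{B}$. Writing $\mathcal{D}_{\mathcal{A}}=\mathcal{D}\cap\mathcal{A}$, $\tilde{\mathcal{D}}_{\mathcal{A}}=\tilde{\mathcal{D}}\cap\mathcal{A}$, and analogously $\mathcal{D}_{\mathcal{B}}$, $\tilde{\mathcal{D}}_{\mathcal{B}}$, this gives
\begin{align*}
C^{\mathbf{T}}_{\text{DBICM}}(M\text{-QAM})
&= \Big(\sum_{j\in\mathcal{D}_{\mathcal{A}}}C_{j,\text{BICM}} + \sum_{k\in\tilde{\mathcal{D}}_{\mathcal{A}}}C^{\mathbf{T}}_{k,\text{DBICM}}\Big) \\
&\quad + \Big(\sum_{j\in\mathcal{D}_{\mathcal{B}}}C_{j,\text{BICM}} + \sum_{k\in\tilde{\mathcal{D}}_{\mathcal{B}}}C^{\mathbf{T}}_{k,\text{DBICM}}\Big),
\end{align*}
where all capacities on the right are those of the $M$-QAM system.

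Next I would argue that the first parenthesised term depends on $\mathbf{T}$ only through $\mathbf{T}_{\mathcal{A}}$ and in fact equals $C^{\mathbf{T}_{\mathcal{A}}}_{\text{DBICM}}(\sqrt{M}\text{-PAM})$. For the undelayed labels $k\in\tilde{\mathcal{D}}_{\mathcal{A}}$ this is the first half of Theorem~\ref{thm:superposition}, which gives $C^{[\mathbf{T}_{\mathcal{A}},\mathbf{T}_{\mathcal{B}}]}_{k,\text{DBICM}}(M\text{-QAM})=C^{\mathbf{T}_{\mathcal{A}}}_{k,\text{DBICM}}(\sqrt{M}\text{-PAM})$; for the delayed labels $j\in\mathcal{D}_{\mathcal{A}}$ the decomposability of Gray-labeled square $M$-QAM as a Cartesian product of two identical $\sqrt{M}$-PAM constellations (the fact used just after Eq.~(\ref{eq:C_bicm_overall}), also recorded in Corollary~\ref{cor:from_thm1}) gives $C_{j,\text{BICM}}(M\text{-QAM})=C_{j,\text{BICM}}(\sqrt{M}\text{-PAM})$. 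Summing, the first term is precisely Eq.~(\ref{eq:C_dbicm_overall}) instantiated for the $\sqrt{M}$-PAM DBICM system whose delay vector is $\mathbf{T}_{\mathcal{A}}$ (its delayed-label set being $\mathcal{D}_{\mathcal{A}}$). Re-indexing the labels of $\mathcal{B}$ by $k\mapsto k-\tfrac{m}{2}$ and invoking the second half of Theorem~\ref{thm:superposition} (and the matching BICM identity from Corollary~\ref{cor:from_thm1}), the same reasoning shows the second parenthesised term equals $C^{\mathbf{T}_{\mathcal{B}}}_{\text{DBICM}}(\sqrt{M}\text{-PAM})$. Hence
\[
C^{\mathbf{T}}_{\text{DBICM}}(M\text{-QAM}) = C^{\mathbf{T}_{\mathcal{A}}}_{\text{DBICM}}(\sqrt{M}\text{-PAM}) + C^{\mathbf{T}_{\mathcal{B}}}_{\text{DBICM}}(\sqrt{M}\text{-PAM}).
\]
Applying this identity verbatim to $\mathbf{T}'=[\mathbf{T}_{\mathcal{B}},\mathbf{T}_{\mathcal{A}}]$ (whose $\mathcal{A}$-half is driven by $\mathbf{T}_{\mathcal{B}}$ and whose $\mathcal{B}$-half by $\mathbf{T}_{\mathcal{A}}$) yields $C^{\mathbf{T}'}_{\text{DBICM}}(M\text{-QAM}) = C^{\mathbf{T}_{\mathcal{B}}}_{\text{DBICM}}(\sqrt{M}\text{-PAM}) + C^{\mathbf{T}_{\mathcal{A}}}_{\text{DBICM}}(\sqrt{M}\text{-PAM})$, and the two right-hand sides are equal, which proves the claim.

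The conceptual content here is entirely carried by Theorem~\ref{thm:superposition}; the part I expect to require the most care is purely bookkeeping: verifying that after the shift $k\mapsto k-\tfrac{m}{2}$ the frozen-bit conditioning structure inside the $\mathcal{B}$-half (which bits are conditioned on, and with which labels) matches exactly that of the $\sqrt{M}$-PAM DBICM system with delay $\mathbf{T}_{\mathcal{B}}$, so that the split term is literally an instance of Eq.~(\ref{eq:C_dbicm_overall}) and not merely termwise equal. I would also note the degenerate cases in which $\mathbf{T}_{\mathcal{A}}$ or $\mathbf{T}_{\mathcal{B}}$ is the all-zero vector: these are covered by Corollary~\ref{cor:from_thm1} and fit the same formula (the corresponding PAM DBICM capacity collapsing to the PAM BICM capacity), so no separate treatment is needed.
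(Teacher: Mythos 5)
Your argument is correct, but it is organized differently from the paper's. The paper proves Theorem~\ref{thm:D_D'_symmetric} by a direct term-by-term comparison of $C^{\mathbf{T}}_{\text{DBICM}}$ and $C^{\mathbf{T}'}_{\text{DBICM}}$: it pairs each delayed label $k$ with its symmetric partner $k'=k\pm\frac{m}{2}$ and invokes Definition~\ref{def:mcbpp} to equate the BICM terms, then pairs the undelayed labels the same way and uses both halves of Theorem~\ref{thm:superposition} (Eqs.~(\ref{eq:thm_1_1})--(\ref{eq:thm_1_2})) to equate the conditional terms, only afterwards assembling Proposition~\ref{prop:p} from Theorem~\ref{thm:D_D'_symmetric}. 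You instead establish the full PAM decomposition $C^{\mathbf{T}}_{\text{DBICM}}(M\text{-QAM}) = C^{\mathbf{T}_{\mathcal{A}}}_{\text{DBICM}}(\sqrt{M}\text{-PAM}) + C^{\mathbf{T}_{\mathcal{B}}}_{\text{DBICM}}(\sqrt{M}\text{-PAM})$ first -- which is exactly Proposition~\ref{prop:p} -- directly from Theorem~\ref{thm:superposition} plus the classical BICM real/imaginary split, and then read off Theorem~\ref{thm:D_D'_symmetric} from commutativity of addition. Both routes rest on the same two ingredients (Theorem~\ref{thm:superposition} and the symmetric-bit property of Gray QAM), so neither is deeper than the other; what yours buys is that Proposition~\ref{prop:p} comes out as the primary statement rather than a consequence, which removes the slightly awkward dependence of the paper's proof of Proposition~\ref{prop:p} (step~(d) in Appendix~D) on Theorem~\ref{thm:D_D'_symmetric}, and it replaces the explicit bijection bookkeeping between $\tilde{\mathcal{D}}$ and $\tilde{\mathcal{D}}'$ with a single canonical form evaluated at both delay schemes. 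Your handling of the degenerate all-zero half-vectors via Corollary~\ref{cor:from_thm1} is also consistent with the paper. The one point worth making explicit if you write this up is the re-indexing detail you already flagged: that for $j\in\mathcal{D}\cap\mathcal{B}$ the identity $C_{j,\text{BICM}}(M\text{-QAM})=C_{(j-\frac{m}{2}),\text{BICM}}(\sqrt{M}\text{-PAM})$ holds, so that the $\mathcal{B}$-block really is an instance of Eq.~(\ref{eq:C_dbicm_overall}) for the PAM system with delay vector $\mathbf{T}_{\mathcal{B}}$.
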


	\begin{proof}
		See Appendix \ref{pf:proof_D_D'_symmetric}.
	\end{proof} 
	Theorems \ref{thm:superposition} and \ref{thm:D_D'_symmetric} show that the effects from delay coded bits in group $\mathcal{A}$ and $\mathcal{B}$ are independent and symmetric. Besides, in uniform Gray labeled BICM, two successive PAM symbols in real Gaussian noise are equivalent to a QAM symbol in circular symmetric complex Gaussian noise in BICM schemes \cite{1021039}. Therefore, following Theorems \ref{thm:superposition} and \ref{thm:D_D'_symmetric}, as well as the nature of uniform Gray labeled QAM, we conclude the DBICM capacity relationship between $M$-QAM and its underlying $\sqrt{M}$-PAMs as the following proposition.
	\begin{proposition}\label{prop:p}
			For a uniform Gray labeled $M$-QAM DBICM system under a delay scheme $\mathbf{T} = [\mathbf{T}_{\mathcal{A}},\mathbf{T}_{\mathcal{B}}]$, where $\mathbf{T}_{\mathcal{A}}$ and $\mathbf{T}_{\mathcal{B}}$  are defined in Theorem \ref{thm:superposition}, we have
			\begin{equation}
			C^{\mathbf{T}}_{\text{DBICM}}(\text{$M$-QAM}) = C^{\mathbf{T}_{\mathcal{A}}}_{\text{DBICM}}(\text{$\sqrt{M}$-PAM})+C^{\mathbf{T}_{\mathcal{B}}}_{\text{DBICM}}(\sqrt{M}\text{-PAM}).  
			\end{equation}
	\end{proposition}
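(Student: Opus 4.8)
The plan is to assemble the claim from the two theorems already proved, together with the standard Gray-QAM/PAM decomposition. First I would recall the decomposition from Eq.~(\ref{eq:C_dbicm_overall}): the DBICM capacity splits as a sum over delayed coded bits (whose capacities equal their BICM values) and a sum over undelayed coded bits (whose capacities are given by Eq.~(\ref{eq:C_dbicm_bit})). Since the index set $\{0,\dots,m-1\}$ partitions as $\mathcal{A}\sqcup\mathcal{B}$, with $\mathcal{A}$ mapped to the real part and $\mathcal{B}$ to the imaginary part, I would reorganize the sum in Eq.~(\ref{eq:C_dbicm_overall}) by first summing over the bits in $\mathcal{A}$ and then over the bits in $\mathcal{B}$:
\begin{align}
C^{\mathbf{T}}_{\text{DBICM}}(M\text{-QAM}) &= \sum_{j\in\mathcal{D}\cap\mathcal{A}}C_{j,\text{BICM}}(M\text{-QAM}) + \sum_{k\in\tilde{\mathcal{D}}\cap\mathcal{A}}C^{\mathbf{T}}_{k,\text{DBICM}}(M\text{-QAM}) \notag \\
&\quad + \sum_{j\in\mathcal{D}\cap\mathcal{B}}C_{j,\text{BICM}}(M\text{-QAM}) + \sum_{k\in\tilde{\mathcal{D}}\cap\mathcal{B}}C^{\mathbf{T}}_{k,\text{DBICM}}(M\text{-QAM}). \notag
\end{align}

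Next I would apply Theorem~\ref{thm:superposition} to each undelayed term: for $k\in\tilde{\mathcal{D}}\cap\mathcal{A}$ the $M$-QAM undelayed capacity equals $C^{\mathbf{T}_{\mathcal{A}}}_{k,\text{DBICM}}(\sqrt{M}\text{-PAM})$, which depends only on $\mathbf{T}_{\mathcal{A}}$; symmetrically for $k\in\tilde{\mathcal{D}}\cap\mathcal{B}$ it equals $C^{\mathbf{T}_{\mathcal{B}}}_{(k-m/2),\text{DBICM}}(\sqrt{M}\text{-PAM})$. For the delayed terms I would invoke the fact (stated just above the proposition, from \cite{1021039}) that in uniform Gray-labeled BICM a square $M$-QAM symbol over a circularly symmetric complex Gaussian channel decomposes into two independent $\sqrt{M}$-PAM symbols over real Gaussian channels, so that $C_{j,\text{BICM}}(M\text{-QAM})=C_{j,\text{BICM}}(\sqrt{M}\text{-PAM})$ for $j\in\mathcal{A}$ and $C_{j,\text{BICM}}(M\text{-QAM})=C_{(j-m/2),\text{BICM}}(\sqrt{M}\text{-PAM})$ for $j\in\mathcal{B}$ — this is also captured by Corollary~\ref{cor:from_thm1}. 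Grouping the $\mathcal{A}$-indexed terms together, and recognizing that $\mathcal{D}\cap\mathcal{A}$ and $\tilde{\mathcal{D}}\cap\mathcal{A}$ are exactly the delayed and undelayed index sets of the $\sqrt{M}$-PAM system with delay vector $\mathbf{T}_{\mathcal{A}}$, the first two sums collapse to $C^{\mathbf{T}_{\mathcal{A}}}_{\text{DBICM}}(\sqrt{M}\text{-PAM})$ by the definition in Eq.~(\ref{eq:C_dbicm_overall}) applied to the PAM system; the last two sums collapse to $C^{\mathbf{T}_{\mathcal{B}}}_{\text{DBICM}}(\sqrt{M}\text{-PAM})$ in the same way. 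This yields the claimed identity.

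The main obstacle, and the only place requiring genuine care, is justifying that the delayed coded bits' capacities are unaffected by the QAM-to-PAM split in the \emph{DBICM} context — i.e., that a delayed bit in group $\mathcal{B}$ does not acquire any conditioning benefit from the decoded bits in $\mathcal{D}\cap\mathcal{A}$. This follows because delayed bits are, by construction, demodulated without any side information (their capacity is the ordinary BICM bit-channel capacity, Eq.~(\ref{eq:C_bicm_bit})), and the real/imaginary independence of the Gray-labeled square constellation means the real-part channel observation is a sufficient statistic for the $\mathcal{A}$-bits and the imaginary-part observation for the $\mathcal{B}$-bits; a short argument factoring $p(\mathbf{y}|\mathbf{z})$ in Eq.~(\ref{eq:p}) as a product of a real-coordinate factor and an imaginary-coordinate factor makes this rigorous. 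Theorem~\ref{thm:D_D'_symmetric} is not strictly needed for the identity as stated but confirms the consistency of the decomposition (swapping $\mathbf{T}_{\mathcal{A}}$ and $\mathbf{T}_{\mathcal{B}}$ leaves the QAM capacity invariant, as the right-hand side is manifestly symmetric in the two PAM terms). I would keep the write-up short, citing Theorem~\ref{thm:superposition}, Corollary~\ref{cor:from_thm1}, and \cite{1021039}, and present the index-regrouping computation as the body of the proof.
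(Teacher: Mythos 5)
Your proposal is correct and follows essentially the same route as the paper's proof in Appendix D: split Eq.~(\ref{eq:C_dbicm_overall}) by the real/imaginary index groups, apply Theorem~\ref{thm:superposition} to the undelayed terms, use the BICM QAM-to-PAM decomposition of \cite{1021039} for the delayed terms, and reassemble via Eq.~(\ref{eq:C_dbicm_overall}) applied to the two PAM systems. The only difference is that the paper inserts an intermediate step invoking Theorem~\ref{thm:D_D'_symmetric} to rewrite the $\mathcal{B}$-group undelayed terms under the swapped delay scheme before reducing to PAM, whereas you apply Eq.~(\ref{eq:thm_1_2}) directly; your observation that Theorem~\ref{thm:D_D'_symmetric} is not strictly needed is accurate and slightly streamlines the argument.
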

	\begin{proof}
		See Appendix \ref{pf:proof_prop_1}.
	\end{proof} Proposition \ref{prop:p} shows that the capacity of a uniform Gray labeled $M$-QAM is the sum of the capacities of its underlying uniform Gray labeled $\sqrt{M}$-PAM DBICM systems. For example, a uniform Gray labeled $64$-QAM DBICM system with a delay scheme $\mathbf{T}= [0, 1, 0, 1, 1, 0]$ can be written as follows,
	\begin{equation}
	C^{[0, 1, 0, 1, 1, 0]}_{\text{DBICM}}(\text{$64$-QAM}) = C^{[0, 1, 0]}_{\text{DBICM}}(\text{$8$-PAM})+C^{[1, 1, 0]}_{\text{DBICM}}(8\text{-PAM}).  
	\end{equation}
	Therefore, the capacity analysis for DBICM $M$-QAM is reduced to that of the underlying DBICM $\sqrt{M}$-PAM under the delay scheme $\mathbf{T}_{\mathcal{A}}$ and $\mathbf{T}_{\mathcal{B}}$. Then, the optimum delay scheme $\mathbf{T}^*$ for a DBICM $M$-QAM with a given $E_s/N_0$ is that
	\begin{align}\label{eq:18}
	&C_\text{DBICM}^{\mathbf{T}^*}(M\text{-QAM}) \notag \\
	 = & \max_{\mathbf{T}_{\mathcal{A}} \in \mathbb{F}_2^{ \frac{m}{2}}}\{{C_{\text{ DBICM}}^{\mathbf{T}_{\mathcal{A}}}}(\sqrt{M}\text{-PAM})\} + \max_{\mathbf{T}_{\mathcal{B}} \in \mathbb{F}_2^{\frac{m}{2}}}\{{C_{\text{ DBICM}}^{\mathbf{T}_{\mathcal{B}}}}(\sqrt{M}\text{-PAM})\}.
	\end{align}
	
	In what follows, under the constrain of $T_{\text{max}}=1$, we provide a method to find the optimal delay scheme $\mathbf{T}^*$ with 2 steps.
	\begin{enumerate}
		\item Find the optimal delay scheme $\mathbf{T}_{\mathcal{A}}^{*}$ among all possible delay schemes for the underlying DBICM $\sqrt{M}$-PAM that gives the largest capacity, such that
		\begin{equation}
		C_\text{DBICM}^{\mathbf{T}^*_{\mathcal{A}}}(\sqrt{M}\text{-PAM}) = \max_{\mathbf{T}_{\mathcal{A}} \in \mathbb{F}_2^{\frac{m}{2}}}\{{C_{\text{ DBICM}}^{\mathbf{T}_{\mathcal{A}}}}(\sqrt{M}\text{-PAM})\}.
		\end{equation}
		\item The optimal delay scheme $\mathbf{T}^{*}$ for DBICM $M$-QAM is then obtained as 
		\begin{equation}
		\mathbf{T}^* = [\mathbf{T}_{\mathcal{A}}^*, \mathbf{T}_{\mathcal{B}}^*] = [\mathbf{T}_{\mathcal{A}}^*, \mathbf{T}_{\mathcal{A}}^*],
		\end{equation} where choosing $\mathbf{T}_{\mathcal{B}}^* = \mathbf{T}_{\mathcal{A}}^*$ leads to the largest capacity of DBICM $M$-QAM according to Eq. (\ref{eq:18}).
	\end{enumerate}
	 
	We point out that the computational complexity for finding the optimal delay scheme for a DBICM $M$-QAM is reduced by using our method. We first consider the situation where the largest number of delayed time slot is $T_{\text{max}}=1$. For a given $E_s/N_0$, the computational complexity for exhaustive search is with $\mathcal{O}(M^2)$ because the constellation size is $M$ and there are $M$ possible delay schemes in total. In contrast, our method has a reduced complexity of $\mathcal{O}(M)$ as for a $\sqrt{M}$-PAM there are $\sqrt{M}$ delay schemes in total. Hence, even finding the optimal delay scheme for $1024$-QAM DBICM is still practical by using the proposed method. In general, the number of possible delay schemes can be magnified as the largest number of delayed time slots $T_{\text{max}}$ increases. For example, there are $(T_{\text{max}}+1)^{\log_2M}$ and $(\sqrt{T_{\text{max}}+1})^{\log_2M}$ possible delay schemes in total for $M$-QAM and $\sqrt{M}$-PAM DBICM, respectively. Therefore, this method has greatly reduced the computational complexity from $\mathcal{O}(M(T_{\text{max}}+1)^{\log_2M})$ to $\mathcal{O}(\sqrt{M}(\sqrt{T_{\text{max}}+1})^{\log_2M})$.
	\subsection{Design Examples}
	\begin{figure} [h]
		\centering
		\includegraphics[width=0.45\textwidth]{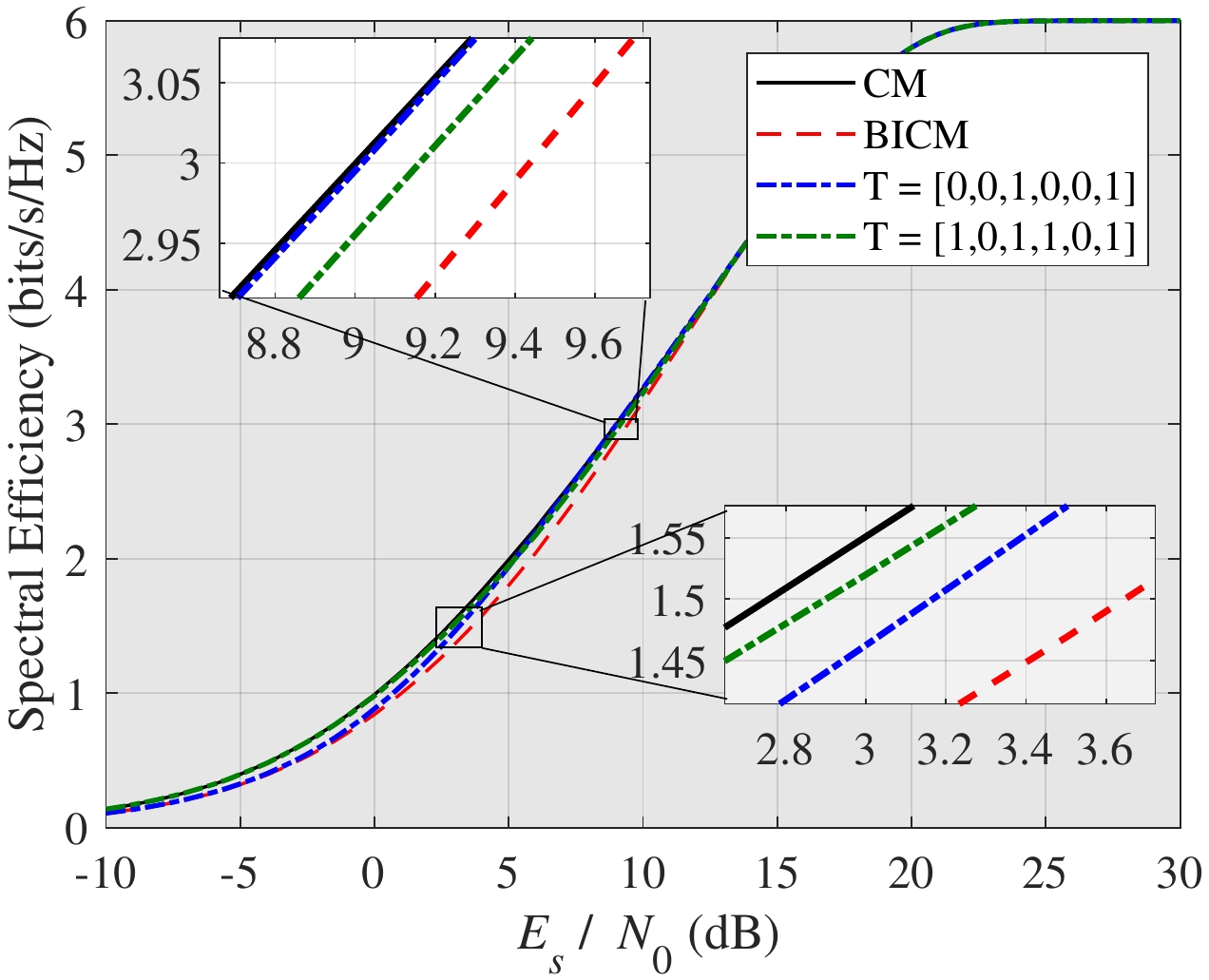}
		\caption{Capacity comparison between DBICM, CM and BICM.} \label{fig:cap_64_dbicm_opt}
\end{figure}
	In this section, we apply the proposed method to find the optimum delay scheme for a variety of modulation levels and code rates. We consider a uniform Gray labeled $64$-QAM DBICM system. First, we compute the capacity of $8$-PAM DBICM with all distinct delay schemes. Note that the delay schemes $\mathbf{T}_{\mathcal{A}} = [0,0,0]$ and $\mathbf{T}_{\mathcal{A}} = [1,1,1]$ are excluded because the bit-channel capacities under these delay schemes are the same for BICM. When the spectral efficiencies are $0.75$ and $1.5$ bits/symbol, the optimal delay scheme for $8$-PAM DBICM is $\mathbf{T}_{\mathcal{A}}^*=[1,0,1]$ and $\mathbf{T}_{\mathcal{A}}^*=[0,0,1]$, respectively. Therefore, we obtain the optimal delay schemes $\mathbf{T}^*=[1,0,1,1,0,1]$ and $\mathbf{T}^*=[0,0,1,0,0,1]$ for $64$-QAM DBICM at spectral efficiencies of $1.5$ and $3$ bits/s/Hz, corresponding to code rates $1/4$ and $1/2$, respectively. 
	
	In Fig. \ref{fig:cap_64_dbicm_opt}, we plot the 64-QAM DBICM capacities under the optimal delay schemes and compare them with the constellation constrained capacity and BICM capacity of 64-QAM. At rate $1/4$, DBICM with the optimized delay scheme $\mathbf{T}=[1,0,1,1,0,1]$ is within $0.1$ dB of the constellation constrained capacity at a spectral efficiency of $1.5$ bits/s/Hz and is $0.7$ dB better than its BICM counterpart. The DBICM capacity of another delay scheme $\mathbf{T}=[0,0,1,0,0,1]$ optimized for rate $1/2$ is even within 0.01 dB of the constellation constrained capacity at a spectral efficiency of $3$ bits/s/Hz, while obtaining a capacity improvement of around $0.45$ dB over its BICM counterpart. We would point out that when the code rate is higher than $1/2$, the DBICM and BICM capacities are very close to that of the constellation constrained capacity.

	\begin{table*}[h]
		\centering
		\caption{Optimum delay scheme $\mathbf{T}^*$ for uniform Gray labeled $M$-QAM DBICM with code Rate $1/4$, $1/3$, $2/5$ and $1/2$.~~~~~~} \label{tab:opt_delay_scheme}
		\begin{tabular}{|c|c|c|c|c|}
			\hline
			Modulation                & Rate & Optimal Delay Scheme & Gap to CM (dB) & Gain over BICM (dB) \\ \hline
			\multirow{4}{*}{16-QAM}   & 1/4  &  [0, 1, 0, 1]                    &       0         &             0.55        \\ \cline{2-5} 
			& 1/3  &  [0, 1, 0, 1]                    &       0         &             0.4        \\ \cline{2-5} 
			& 2/5  &   [0, 1, 0, 1]                   &     0           &              0.3       \\ \cline{2-5} 
			& 1/2  &  [0, 1, 0, 1]                    &      0          &      0.2               \\ \hline
			\multirow{4}{*}{64-QAM}   & 1/4  &  [1, 0, 1, 1, 0, 1]                    &    0.15           &        0.7             \\ \cline{2-5} 
			& 1/3  &  [0, 1, 0, 0, 1, 0]                    &       0.15         &             0.6        \\ \cline{2-5} 
			& 2/5  &    [0, 0, 1, 0, 0, 1]                  &    0.1            &    0.55                 \\ \cline{2-5} 
			& 1/2  &    [0, 0, 1, 0, 0, 1]                   &       0.01         &       0.45              \\ \hline
			\multirow{4}{*}{256-QAM}  & 1/4  &  [0, 0, 1, 1, 0, 0, 1, 1]                    &     0.3           &        0.65             \\ \cline{2-5} 
			& 1/3  &  [1, 1, 0, 1, 1, 1, 0, 1]                    &     0.25           &      0.65               \\ \cline{2-5} 
			& 2/5  &  [0, 0, 1, 1, 0, 0, 1, 1]                    &     0.25           &      0.65               \\ \cline{2-5} 
			& 1/2  &  [0, 0, 0, 1, 0, 0, 0, 1]                    &     0.15           &                 0.6    \\ \hline
			\multirow{4}{*}{1024-QAM} & 1/4  & [1, 1, 0, 1, 1, 1, 1, 0, 1, 1]                     &     0.25           &      0.85               \\ \cline{2-5} 
			& 1/3  &  [1, 0, 0, 1, 1, 1, 0, 0, 1, 1]                    &      0.25          &         0.8            \\ \cline{2-5} 
			& 2/5  &  [0, 0, 0, 1, 1, 0, 0, 0, 1, 1]                    &      0.25          &         0.8            \\ \cline{2-5} 
			& 1/2  &  [0, 0, 0, 1, 1, 0, 0, 0, 1, 1]                    &    0.25            &    0.65                 \\ \hline
		\end{tabular}
	\end{table*}
	In Table \ref{tab:opt_delay_scheme}, we list the optimal delay schemes for DBICM systems with Gray labeled square $16$-QAM, $64$-QAM, $256$-QAM and $1024$-QAM constellations at four code rates, $\frac{1}{4}$, $\frac{1}{3}$, $\frac{2}{5}$, and $\frac{1}{2}$. In addition, we also listed the capacity gap between the DBICM with optimal delay scheme and the constellation constrained capacity, as well as the corresponding capacity gain over BICM in Table \ref{tab:opt_delay_scheme}. It can be observed that the capacity of DBICM becomes very close to that of the constellation constrained capacity when the code rate increases and the gain of DBICM over BICM increases with constellation size $M$. One may also notice that the capacity improvement from BICM to DBICM is not large in the high code rate region for uniform Gray labeled QAM. However, this may not be the case for other constellations as previous work in \cite{DBICM_cap} shows that DBICM with quasi-Gray labeled half $16$-QAM can achieve up to 0.4 dB capacity gain over BICM at rate 3/4.
	\section{Irregular LDPC finite code design for DBICM}
	In this section, we provide a code construction and optimization method for DBICM with irregular LDPC codes. Recall that UEP exists in high order modulation systems, where not all bit-channels are with equal capacities. When designing irregular LDPC codes for DBICM, it is important to take UEP into account. In particular, we take special consideration of the connection between variable nodes (VNs) with different degrees and various bit-channels, which we refer to as the channel assignment. The conventional LDPC code design needs to optimize the degree distributions for the VNs and check nodes (CNs). However, the code design in this paper needs to consider both the degree distributions and the channel assignment. As in conventional LDPC codes \cite{1365488}, our code design adopted a concentrated CN degree $d_{\text{c}}$, such that for $j\in\{1,\ldots,N(1-R)\}$, the $j$-th CN, denoted by $c_j$, is connected with $d_{\text{c}}$ VNs. Then, we apply differential evolution (DE) \cite{DE} to optimize both the VN degree distributions and channel assignments, subject to the lowest decoding threshold. The decoding threshold is computed via PEXIT chart, appreciating its high accuracy in estimating the decoding threshold for high order modulated LDPC codes \cite{7339431}.

	\subsection{Protograph-EXIT Chart Based Code optimization Algorithm}\label{subsec:code_opt}

	To model the channel assignment, we first classify bit-channels into different bit-channel types according to their corresponding bit-channel capacities, such that all the bit-channel capacities are identical within each bit-channel type. For example, in a uniform Gray labeled $64$-QAM DBICM with a  delay scheme $\mathbf{T}=[0,0,1,0,0,1]$, we classify six bit-channels into three bit-channel types regarding their capacities. We show the mapping between bit-channels and bit-channel types in Table \ref{tab:type}. In this example, type $i$ bit-channels have higher capacities than type $i+1$ bit-channels, for $i \in \{0,1\}$ and for all SNR.

	\begin{table}[h]
		\centering
		\caption{An example of bit-channel classification in DBICM scheme.~~~~~~} \label{tab:type}
		\begin{tabular}{|c|c|c|c|}
			\hline
			Bit-channel Type             & $0$ & $1$ & $2$ \\ \hline
			{Bit-channel} 				 & $\{0,3\}$ & $\{1,4\}$ & $\{2,5\}$ \\ \hline
		\end{tabular}
	\end{table} 
	
	Let the VN degree distribution vector be $\bm{\lambda}=[\lambda_{j}]_{j=1}^{V}$, where the $\lambda_{j}$ represents the fraction of all edges connected to degree-$j$ VNs and $V$ is the maximum number of VN degrees. Thus, $\lambda_{j}$ should satisfy the following constraints:
	\begin{equation}\label{eq:lambdaj}
	0\le\lambda_{j}\le1, \quad \text{and} \quad \sum_{j=1}^{V}\lambda_j=1.
	\end{equation} 
	
	Let us define a channel assignment matrix $\mathbf{P}\in\mathbb{R}^{S\times V}$, where $S$ is the number of bit-channel types, and $p_{i,j}\in\mathbf{P}$ is the element in the $i$-th row and $j$-th column representing the fraction of degree-$j$ VNs assigned to the $i$-th bit-channel type. Note that the channel assignment matrix $\mathbf{P}$ is a generalized version of that in \cite{junyi_regandirr} where matrix $\mathbf{P}$ only has two rows since all bit-channels are classified into two types, i.e., the reliable and the unreliable channels. This generalization can improve the accuracy of decoding threshold analysis since the capacity differences between all bit-channels are considered via the channel assignment matrix $\mathbf{P}$ in this paper. This is important for high level modulations.
	
	Let $m_i$ be the number of bit-channels of channel type $i\in\{0,..., S-1\}$. The channel assignment matrix $\mathbf{P}$ can be generated from $\bm{\lambda}$ with the following constraints:
	\begin{subnumcases}{}
		0 \leq p_{i,j} \leq 1,  \label{eq:pvn_a}\\
		\sum\nolimits_{i=0}^{S-1}\sum\nolimits_{j=1}^{V}p_{i,j} = 1, \label{eq:pvn_b}\\
		\sum\nolimits_{i=0}^{S-1}p_{i,j}=\lambda_j, \forall j\in\{1,\ldots,V\}, \label{eq:pvn_c}\\
		\sum\nolimits_{j=1}^{V}p_{i,j}=\frac{m_i}{m}, \forall i\in\{1,\ldots,S\}. \label{eq:pvn_d}
	\end{subnumcases}

	Eqs. (\ref{eq:pvn_a}-\ref{eq:pvn_b}) specify the range for each element $p_{i,j}$ and its general constraint for $p_{i,j}\in\mathbf{P}$. Eq. (\ref{eq:pvn_c}) specifies the relationship between different channel types of the same VN degree $j$ in $\mathbf{P}$ and $\lambda_j$. The relationship between the edges of a specific channel type and the number of bit-channels in the specific type is described by Eq. (\ref{eq:pvn_d}). Although all the bit-channels need to be used in the coding scheme, randomly choosing a channel assignment could lead to performance degradation.
	
	Now, with these constraints for $\bm{\lambda}$ and $\mathbf{P}$, we can construct LDPC codes for DBICM and compute its decoding threshold $\theta$ via PEXIT chart. In the following, we optimize $\bm{\lambda}$ and $\mathbf{P}$ with a minimum $\theta$. Note that, since there are $V$ unknown variables, one equation and one inequality in Eq. (\ref{eq:lambdaj}), there are $V-1$ free variables, which are needed to be optimized for $\bm{\lambda}$. Similarly, for $p_{i,j}\in\mathbf{P}$ in Eqs. (\ref{eq:pvn_a}-\ref{eq:pvn_d}), $S\times V-S-V-1$ free variables are needed to be optimized. In general, these free variables can be any real numbers within their associated ranges, leading to a huge optimization space. In this paper, we resort to DE to optimize $\bm{\lambda}$ and $\mathbf{P}$ for the proposed DBICM schemes. Ideally, the VN distribution $\bm{\lambda}$ and channel assignment matrix $\mathbf{P}$ should be optimized jointly via nesting the optimization for $\bm{\lambda}$ and $\mathbf{P}$ together, which means that DE is performed recursively. The recursive use of DE results in a very high optimization complexity even for a moderate population size in each DE. Hence, as a practical alternative, a two-step cascaded DE is used to optimize $\bm{\lambda}$ and $\mathbf{P}$ in each step individually. In both DE steps, the inputs are composed of $N_1$ optimization object candidates. These $N_1$ candidates are evolved through $N_2$ generations to reduce the decoding threshold.
		\begin{table}[h]\label{table:algorithm 1} 
		\begin{algorithm} [H]              
			\renewcommand\thealgorithm{1}
			\centering
			\caption{Code optimization Algorithm}
			\label{algorithm:proposed_opt_alg}    
			\begin{algorithmic} [1] 
				\REQUIRE $C^{\mathbf{T^*}}_{{i},\text{DBICM}}, i\in\{0,1,...,m-1\}$.
				\ENSURE $\mathbf{H}^{*}$
				\PRINT Set $\theta^{*} = \infty$, randomly generate $\mathcal{C}_{\bm{\lambda}}^{0}$, where $\bm{\lambda}$ follows the constraints in Eq. (\ref{eq:lambdaj}).
				\FOR {$k=0:N_2 -1$}
				\FOR {$l = 1:N_1$} 
				\STATE Generate $\mathbf{H}_{l}^{k}$ from ${\bm{\lambda}}_{l}^{k}$ via the conventional PEG algorithm\cite{PEG}.
				\STATE Calculate $\theta_{l}^{k}$ for $\mathbf{H}_{l}^{k}$ using PEXIT chart \cite{PEXIT_chart} with given $C^{\mathbf{T}^*}_{i,\text{DBICM}}$.
				\IF {$k>0$}
				\IF {$\theta_{l}^{k-1}<\theta_{l}^{k}$}
				\STATE $\bm{\lambda}_{l}^{k} = \bm{\lambda}_{l}^{k-1}$.
				\ENDIF
				\IF {$\theta_{l}^{k} < \theta^{*}$}
				\STATE $\theta^{*} = \theta_{l}^{k}$, $\bm{\lambda}^{*} = \bm{\lambda}_{l}^{k}$, $\mathbf{H}^{*} = \mathbf{H}_{l}^{k}$.
				\ENDIF
				\ENDIF
				\ENDFOR
				\STATE Generate $\mathcal{C}_{\bm{\lambda}}^{k+1}$ from $\mathcal{C}_{\bm{\lambda}}^{k}$ via mutation and recombination in DE \cite{DE}.
				\ENDFOR 
				\PRINT Randomly generate $\mathcal{C}_{\mathbf{P}}^{0}$ from $\bm{\lambda}^{*}$. $\forall \mathbf{P}\in \mathcal{C}_{\mathbf{P}}^{0}$ satisfies Eqs. (\ref{eq:pvn_a}-\ref{eq:pvn_d}).
				\STATE Change the optimization object $\bm{\lambda}$ to $\mathbf{P}$ and replace the code construction algorithm in Step 4 by Algorithm \ref{algorithm:proposed_code_alg}, and then repeat Steps 2-16.
			\end{algorithmic}
		\end{algorithm}
	\end{table} 
	
	In the following, we introduce the parameters used in the code optimization algorithm. We denote the index of the current generation by $k$ in the DE, and the candidate sets for $\bm{\lambda}$ and $\mathbf{P}$ in the $k$-th generation by $\mathcal{C}^k_{\bm{\lambda}}$ and $\mathcal{C}^k_{\mathbf{P}}$, respectively, where $k\in\{0,\ldots,N_2\}$, $k=0$ represents the randomly initialized generation and $N_2$ denotes the maximum number of generations allowed for DE. For each generation, the population size is $N_1$. Let ${\bm{\lambda}}_{l}^{k}$ and $\mathbf{P}_{l}^{k}$, $l \in \{1,...,N_1\}$, be the $l$-th candidate in the $k$-th generation of $\mathcal{C}_{\bm{\lambda}}^{k}$ and $\mathcal{C}_{\mathbf{P}}^{k}$, respectively. Let $\mathbf{H}_{l}^{k}$ represent the parity-check matrix associated with ${\bm{\lambda}}_{l}^{k}$ or $\mathbf{P}_{l}^{k}$ and the corresponding decoding threshold $\theta_{l}^k$ is computed from PEXIT chart. Let $\theta^*$, $\bm{\lambda}^*$, $\mathbf{P}^*$ and $\mathbf{H}^*$ be the optimized decoding threshold, VN degree distribution, channel assignment matrix and parity-check matrix, respectively. The proposed code optimization algorithm is presented in \textbf{Algorithm 1}. 

	In \textbf{Algorithm 1}, Steps 1 to 16 optimize $\bm{\lambda}$ and Steps 17 to 18 optimize $\mathbf{P}$. These two parts are cascaded to obtained $\mathbf{H}^*$ which has the lowest decoding SNR threshold $\theta^*$. Here $\mathbf{H}^*$ should satisfy $\mathbf{P}^*$ and it can be constructed via the proposed PEG-like code construction algorithm (\textbf{Algorithm \ref{algorithm:proposed_code_alg}}), which will be presented next.

	\subsection{Constrained PEG-like Code Construction Algorithm}
	To construct LDPC codes with a minimized decoding threshold for high order modulated systems, a well designed channel assignment matrix $\mathbf{P}$ is required to be satisfied in the code construction. While the conventional PEG algorithm can only satisfy the degree distribution $\bm{\lambda}$ but not the $\mathbf{P}$, we propose a constrained PEG-like algorithm for constructing LDPC codes that satisfy both the degree distribution $\bm{\lambda}$ and the channel assignment matrix $\mathbf{P}$.
	
	Let us consider an LDPC code ensemble with VN degrees vary from $d_{v_\text{1}}$ to $d_{v_\text{V}}$ and CN degree is fixed at $d_c$. We define VN degree sequences $\mathbf{g}_i=[g_{i,j}]_{j=1}^{Nm_i/m}$, where $g_{i,j}$ is the $j$-th element of $\mathbf{g}_i$. The VN degree sequences $\mathbf{g}_i$ is generated from the channel assignments $\mathbf{P}$ as 
	\begin{align} \label{eq:g_i}
	&\mathbf{g}_i = [\overbrace{\strut \underbrace{\strut d_{v_\text{1}},\cdots,d_{v_\text{1}}}_{n p_{i,1}},\underbrace{\strut d_{v_\text{2}},\cdots,d_{v_\text{2}}}_{n p_{i,2}},\cdots, \underbrace{\strut d_{v_\text{V}},\cdots,d_{v_\text{V}}}_{n p_{i,V}}}^{Nm_i/m}].
	\end{align} For $i\in\{0,...,S-1\}$, $\mathbf{g}_i$ lists the degrees of all VNs that are assigned with the $i$-th bit-channel type. Note that the length of $\mathbf{g}_i$ may differ for different bit-channel type $i$. We introduce a temporary counter vector, denoted by $\mathbf{a}=[a_i]_{i=0}^{S-1}, a_i\in\{1,\ldots,Nm_i/m\}$, to count the times of each individual $\mathbf{g}_i$ being used by the constrained PEG-like algorithm, such that the $j$-th element $g_{i,j}\in\mathbf{g}_{i}$ is assigned to the $a_i$-th VN in the $i$-th bit-channel type.
	
	The idea of the proposed constrained PEG-like code construction algorithm is as follows. For a given code length $N$, we map VNs to bit-channels by using the continuous bit mapping. Specifically, for $i\in\{1,\ldots,N\}$, the $i$-th VN, denoted by $v_i$, is mapped to the $(i-1)_m$-th bit-channel, where $(i-1)_m \triangleq (i-1)\mod{m}$. Furthermore, by following the bit-channel type classification in Section \ref{subsec:code_opt}, we define a mapping function that maps the $j$-th bit-channel to the $k$-th bit-channel type i.e., $\phi(j)=k$, where $j\in\{0,\ldots,m-1\}$ and $k\in\{0,\ldots,S-1\}$. Then, VN $v_i$ is implicitly mapped to the $\phi((i-1)_m)$-th bit-channel type. For example, in a uniform Gray labeled $64$-QAM DBICM with a delay scheme $\mathbf{T}=[0,0,1,0,0,1]$, $v_7$ is mapped to the $0$-th bit-channel and the $\phi(0)$-th bit-channel type, where $\phi(0) = 0$ according to Table \ref{tab:type}. We assign $v_i$ with degree-$a_{\phi((i-1)_m)}$ from $\mathbf{g}_{\phi((i-1)_m)}$. At last, a constrained PEG-like algorithm connects all the VNs and CNs following the degree distribution and the channel assignment $\mathbf{P}$. In this way, $\mathbf{P}$ is imposed as a constraint in constructing LDPC codes for DBICM and BICM. The details of the constrained PEG-like algorithm are presented in \textbf{Algorithm 2}: 

	\begin{table}[h]\label{table:algorithm2} 
		\begin{algorithm} [H]              
			\renewcommand\thealgorithm{2}
			\caption{Constrained PEG-like Algorithm}
			\label{algorithm:proposed_code_alg}    
			\begin{algorithmic} [1] 
				\REQUIRE $\mathbf{P}$, $N$, $d_{\text{c}}$, $R$, $m$, and $[m_i]_i\in\{0,\ldots,S-1\}$.
				\ENSURE $\mathbf{H}$. 
				\PRINT $\mathbf{H} = \mathbf{0}_{N(1-R),N}$ and $\mathbf{a}=\mathbf{0}$.
				\FOR {$i=0:S-1$}
				\STATE Generate $\mathbf{g}_i$ from $\mathbf{P}$ via Eq. (\ref{eq:g_i}). 
				\ENDFOR
				\FOR {$i=1:N$}
				\STATE $a_{\phi((i-1)_m)} = a_{\phi((i-1)_m)} + 1.$
				\FOR {$l=1:g_{\phi((i-1)_m),a_{\phi((i-1)_m)}}$}
				\STATE Find the most distant CNs, connected with less than $d_{\text{c}}$ VNs, from $v_i$. 
				\IF {Multiple CN candidates exist} 
				\STATE Randomly choose a CN $c_j$, among the available CN candidates.
				\STATE $H_{i,j} = 1$.
				\ENDIF
				\ENDFOR
				\ENDFOR 
			\end{algorithmic}
		\end{algorithm}
	\end{table}

	In \textbf{Algorithm 2}, Steps 2-4 generate $\mathbf{g}_i, i\in\{0,\ldots,S-1\},$ that satisfies the channel assignments $\mathbf{P}$. Then all $\mathbf{g}_i$ are applied to the PEG algorithm as the constraints to construct LDPC codes in Steps 5 to 14.

	\section{Numerical Results}
	
	In this section, we construct and optimize LDPC codes for uniform Gray labeled $M$-QAM DBICM, $M = \{16, 64\}$, with the optimal delay schemes listed in Table \ref{tab:opt_delay_scheme}. We also evaluate the BER performances of the designed codes via simulation. The LDPC code rates for the DBICM schemes are $1/4$, $2/5$ and $1/2$. The codeword lengths for $16$-QAM and $64$-QAM are $100,000$ and $120,000$, respectively. In particular, we set the maximum number of VN degree $V=10$, as a higher value of $V$ imposes larger optimization complexity but marginal performance improvement. The concentrated CN degrees of 4, 5 and 7 are applied for code rates $1/4$, $2/5$, and $1/2$, respectively. For the parameters employed in DE, the differential weight and crossover probability are set to $0.5$ by following \cite{DE}. We also set population number $N_{1}$ scaling with the size of $\bm{\lambda}$ or $\mathbf{P}$, such as $10(V-1)$ and $10(SV-1)$ for $\mathcal{C}_{\bm{\lambda}}$ and $\mathcal{C}_{\mathbf{P}}$, respectively. We set the number of generations in both DE steps for $\bm{\lambda}$ and $\mathbf{P}$ to $N_2=10$, as it is observed that the improvements of the designed codes' decoding thresholds are minor. We also design LDPC codes for BICM by using our code optimization algorithm and the constrained PEG-like algorithm for comparison benchmark.

	\begin{table*}[h]
			\caption{optimized $\mathbf{P}$ and the thresholds for uniform Gray labeled $16$-QAM DBICM and BICM with Code Rates $1/4$, $2/5$ and $1/2$.~~~~~~} \label{tab:pvn_16qam}
								\begin{tabular}{|c|c|c|c|c|c|c|c|c|c|c|c|c|}
					\hline
					R   & MOD   & Bit-channels & $d_{v_2}$      & $d_{v_3}$      & $d_{v_4}$      & $d_{v_5}$      & $d_{v_6}$      & $d_{v_7}$    & $d_{v_8}$      & $d_{v_9}$      & $d_{v_{10}}$      & Thresholds   ($E_b/N_0$)   \\ \hline
					\multirow{4}{*}{1/4} & \multirow{2}{*}{DBICM} & $\{0,2\}$    & 0.3866 & 0.0575 & 0.0000 & 0.0006 & 0.0003 & 0.0000 & 0.0000 & 0.0113 & 0.0436 & \multirow{2}{*}{$0.8398$} \\ \cline{3-12} 
					&                        & $\{1,3\}$    & 0.4020 & 0.0381 & 0.0000 & 0.0009 & 0.0001 & 0.0000 & 0.0002 & 0.0007 & 0.0580 & \\ \cline{2-13} 
					& \multirow{2}{*}{BICM}  & $\{0,2\}$    & 0.3580 & 0.0793 & 0.0000 & 0.0009 & 0.0042 & 0.0048 & 0.0005 & 0.0061 & 0.0462 & \multirow{2}{*}{$1.3672$}\\ \cline{3-12} 
					&                        & $\{1,3\}$    & 0.4149 & 0.0291 & 0.0003 & 0.0002 & 0.0000 & 0.0027 & 0.0005 & 0.0040 & 0.0484 & \\ \hline
					\multirow{4}{*}{2/5} & \multirow{2}{*}{DBICM} & $\{0,2\}$    & 0.3039 & 0.1868 & 0.0000 & 0.0000 & 0.0030 & 0.0038 & 0.0013 & 0.0009 & 0.0002 & \multirow{2}{*}{$1.8066$} \\ \cline{3-12} 
					&                        & $\{1,3\}$    & 0.3375 & 0.0601 & 0.0060 & 0.0002 & 0.0124 & 0.0057 & 0.0008 & 0.0081 & 0.0693 & \\ \cline{2-13} 
					& \multirow{2}{*}{BICM}  & $\{0,2\}$    & 0.4918 & 0.0000 & 0.0000 & 0.0017 & 0.0019 & 0.0000 & 0.0001 & 0.0037 & 0.0008 & \multirow{2}{*}{$2.0938$}\\ \cline{3-12} 
					&                        & $\{1,3\}$   & 0.1672 & 0.2141 & 0.0034 & 0.0005 & 0.0066 & 0.0000 & 0.0513 & 0.0567 & 0.0001 & \\ \hline
					\multirow{4}{*}{1/2} & \multirow{2}{*}{DBICM} & $\{0,2\}$    & 0.3579 & 0.0887 & 0.0000 & 0.0015 & 0.0000 & 0.0000 & 0.0004 & 0.0003 & 0.0512 & \multirow{2}{*}{$2.5703$}\\ \cline{3-12} 
					&                        & $\{1,3\}$    & 0.2623 & 0.1219 & 0.0000 & 0.0017 & 0.0016 & 0.0000 & 0.0193 & 0.0018 & 0.0913 & \\ \cline{2-13} 
					& \multirow{2}{*}{BICM}  & $\{0,2\}$    & 0.2457 & 0.1819 & 0.0029 & 0.0000 & 0.0001 & 0.0013 & 0.0004 & 0.0093 & 0.0583 & \multirow{2}{*}{$2.7266$}\\ \cline{3-12} 
					&                        & $\{1,3\}$    & 0.3464 & 0.0605 & 0.0035 & 0.0000 & 0.0002 & 0.0024 & 0.0001 & 0.0052 & 0.0817 & \\ \cline{1-13} 
				\end{tabular}
		\end{table*}

	The bit-channel assignments and their corresponding decoding threshold, for $16$-QAM and $64$-QAM, are listed in Table \ref{tab:pvn_16qam} and Table \ref{tab:pvn_64qam}, separately. It can be observed that the designed LDPC codes for DBICM have better decoding thresholds than the corresponding BICM. For instance, the designed LDPC codes for DBICM with the optimal delay scheme at rate $1/4$ can reduce decoding thresholds by $0.4674$ dB and $0.7129$ dB over BICM for $16$-QAM and $64$-QAM, respectively. Also, the decoding thresholds for all designed codes are within $0.8$ dB to their corresponding capacity limits.
	
	\begin{figure}[h]
		\centering
		\subfigure[16-QAM.]
		{
			\includegraphics[width=0.45\textwidth]{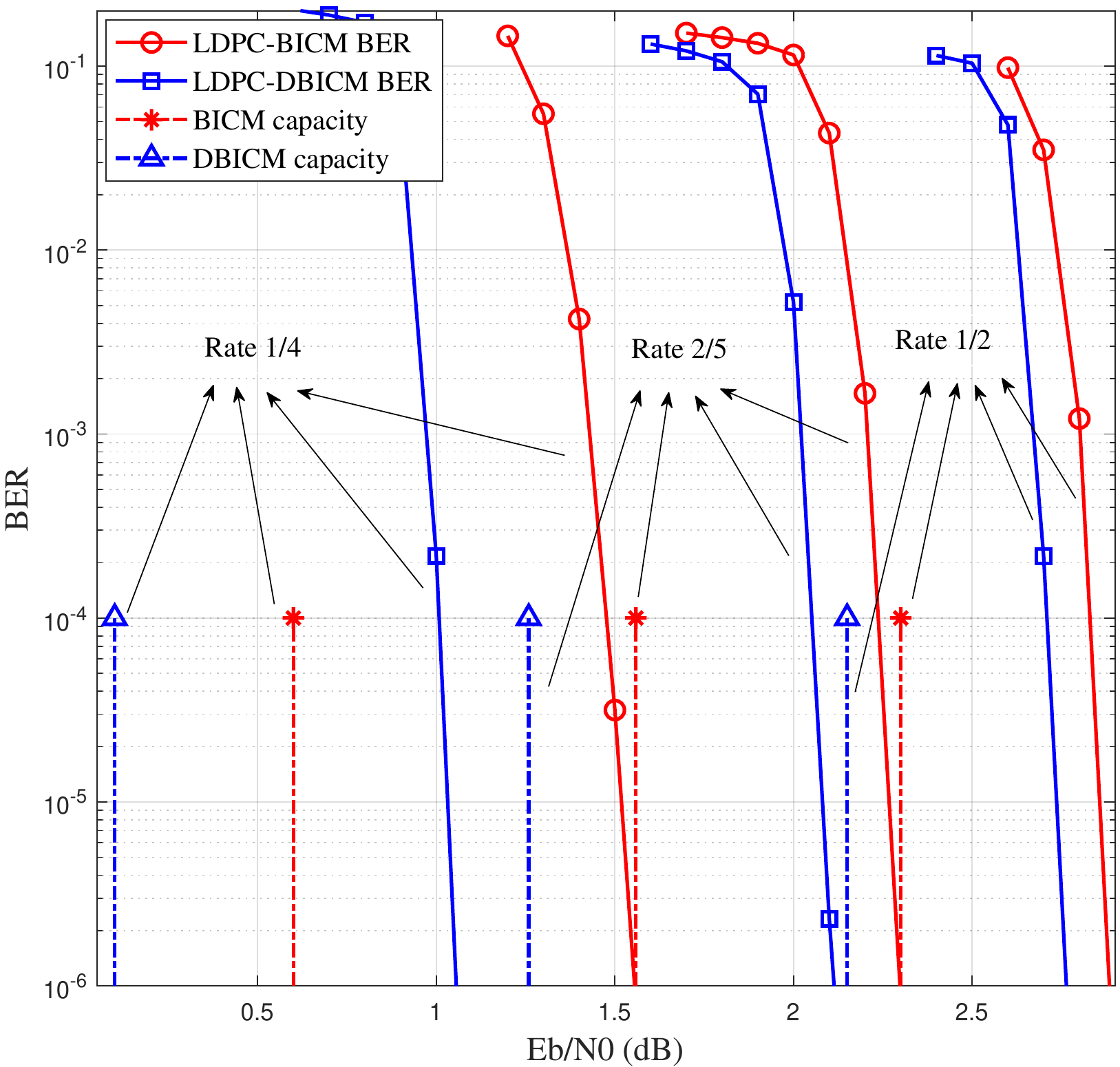}
			\label{fig:16qam_result}
		}
		\subfigure[64-QAM.]
		{
			\includegraphics[width=0.45\textwidth]{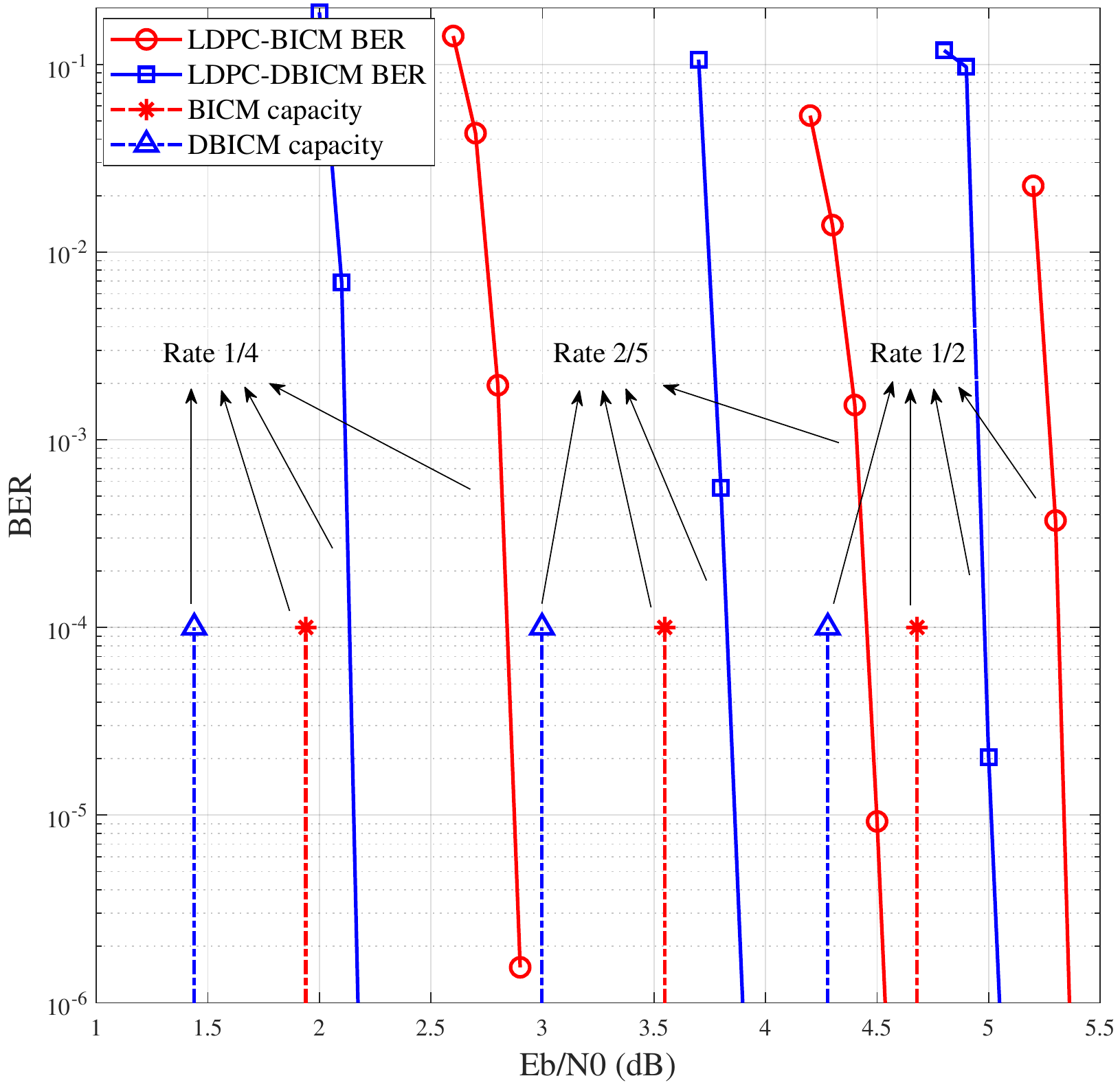}
			\label{fig:64qam_result}
		}
		\caption{BER of LDPC coded 16-QAM and 64-QAM DBICM with the optimal delay scheme.}
		\label{fig:1664}
	\end{figure} 
	
	Figs. \ref{fig:16qam_result} and \ref{fig:64qam_result} show the BER simulation results for LDPC coded $16$-QAM DBICM and $64$-QAM DBICM, respectively. It can be observed that all the designed codes for DBICM outperform the corresponding LDPC coded BICM in the BER performances. For example, the designed LDPC coded DBICM can achieve BER performance gains of $0.5$ dB and $0.7$ dB over LDPC coded BICM for $16$-QAM and $64$-QAM, respectively at rate $1/4$ and BER reaching $10^{-5}$. It can also be observed that the designed LDPC coded DBICM tends to have larger BER performance gains over LDPC coded BICM with the increasing modulation size and decreasing code rate. This result is consistent with the capacity improvement for DBICM over BICM shown in Table \ref{tab:opt_delay_scheme} and the decoding threshold improvements in Table \ref{tab:pvn_16qam} and Table \ref{tab:pvn_64qam}. Finally, the obtained BER performance for all designed LDPC codes is within $1$ dB of their capacity limits and $0.2$ dB of their decoding thresholds at a BER of $10^{-5}$. This shows the effectiveness of our proposed code construction and optimization algorithm for DBICM with LDPC codes in high order modulation systems.

	\begin{table*}[h]
			\caption{optimized $\mathbf{P}$ and the thresholds for uniform Gray labeled $64$-QAM DBICM and BICM with Code Rates $1/4$, $2/5$ and $1/2$.~~~~~~} \label{tab:pvn_64qam}
			\begin{tabular}{|c|c|c|c|c|c|c|c|c|c|c|c|c|}
				\hline
				R   & MOD   & Bit-channels & $d_{v_2}$      & $d_{v_3}$      & $d_{v_4}$      & $d_{v_5}$      & $d_{v_6}$      & $d_{v_7}$    & $d_{v_8}$      & $d_{v_9}$      & $d_{v_{10}}$      & Thresholds  ($E_b/N_0$)    \\ \hline
				\multirow{6}{*}{1/4} & \multirow{3}{*}{DBICM} & \{0, 3\} & 0.2552 &	0.0098 &	0.0006 &	0.0000	 & 0.0021 &	0.0001 &	0.0006 &	0.0043 &	0.0605 & \multirow{3}{*}{$2.1387$}\\ \cline{3-12}
				&                        & \{1, 4\} & 0.2671 &	0.0658 &	0.0004 &	0.0000 &	0.0000 &	0.0000 &	0.0000 &	0.0000 &	0.0000 & \\ \cline{3-12}
				&                        & \{2, 5\} & 0.2701 &	0.0072 &	0.0046 &	0.0051 &	0.0007 &	0.0000 &	0.0003 &	0.0031 &	0.0422 &\\ \cline{2-13}
				& \multirow{3}{*}{BICM} & \{0, 3\} & 0.2449 &	0.0592 &	0.0071 &	0.0037 &	0.0016 &	0.0007 &	0.0002 &	0.0002 &	0.0159 & \multirow{3}{*}{$2.8516$}\\ \cline{3-12}
				&                        & \{1, 4\} & 0.2109 &	0.0264 &	0.0012 &	0.0004 &	0.0038 &	0.0015 &	0.0001 &	0.0002 &	0.0887 &\\ \cline{3-12}
				&                        & \{2, 5\} & 0.3249 &	0.0062 &	0.0013 &	0.0008 &	0.0001 &	0.0000 &	0.0000 &	0.0000 &	0.0000 &\\ \hline
				\multirow{6}{*}{2/5} & \multirow{3}{*}{DBICM} & \{0, 3\} & 0.2058 & 0.0033 & 0.0065 & 0.0162 & 0.0000 & 0.0011 & 0.0054 & 0.0711 & 0.0240 & \multirow{3}{*}{$3.6230$}\\ \cline{3-12} 
				&                        & \{1, 4\} & 0.2567 & 0.0724 & 0.0018 & 0.0020 & 0.0001 & 0.0002 & 0.0001 & 0.0001 & 0.0001 &\\ \cline{3-12} 
				&                        & \{2, 5\} & 0.2107 & 0.1222 & 0.0003 & 0.0001 & 0.0000 & 0.0000 & 0.0000 & 0.0000 & 0.0000 &\\ \cline{2-13} 
				& \multirow{3}{*}{BICM}  & \{0, 3\} & 0.2216 & 0.0059 & 0.0042 & 0.0000 & 0.0001 & 0.0075 & 0.0001 & 0.0923 & 0.0017 & \multirow{3}{*}{$4.1504$}\\ \cline{3-12} 
				&                        & \{1, 4\} & 0.2230 & 0.0964 & 0.0003 & 0.0000 & 0.0002 & 0.0107 & 0.0000 & 0.0008 & 0.0019 &\\ \cline{3-12} 
				&                        & \{2, 5\} & 0.2179 & 0.1151 & 0.0002 & 0.0001 & 0.0000 & 0.0000 & 0.0000 & 0.0000 & 0.0000 & \\ \hline
				\multirow{6}{*}{1/2} & \multirow{3}{*}{DBICM} & \{0, 3\}                      & 0.2401 & 0.0071 & 0.0005 & 0.0010 & 0.0002 & 0.0000 & 0.0000 & 0.0005 & 0.0840 & \multirow{3}{*}{$4.8340$} \\ \cline{3-12} 
				&                        & \{1, 4\}                      & 0.1576 & 0.1398 & 0.0004 & 0.0009 & 0.0000 & 0.0000 & 0.0000 & 0.0004 & 0.0341 &\\ \cline{3-12} 
				&                        & \{2, 5\}                      & 0.1957 & 0.0995 & 0.0017 & 0.0001 & 0.0001 & 0.0000 & 0.0000 & 0.0002 & 0.0359 &\\ \cline{2-13} 
				& \multirow{3}{*}{BICM}  & \{0, 3\}                      & 0.1931 & 0.0652 & 0.0017 & 0.0002 & 0.0000 & 0.0003 & 0.0000 & 0.0000 & 0.0728 & \multirow{3}{*}{$5.2051$}\\ \cline{3-12} 
				&                        & \{1, 4\}                      & 0.2215 & 0.0855 & 0.0005 & 0.0015 & 0.0001 & 0.0001 & 0.0000 & 0.0001 & 0.0240 &\\ \cline{3-12} 
				&                        & \{2, 5\}                      & 0.1789 & 0.0957 & 0.0004 & 0.0002 & 0.0002 & 0.0000 & 0.0000 & 0.0000 & 0.0580 &\\ \hline
			\end{tabular}
		\end{table*}
	\section{Conclusion} \label{sec:con}
	In this paper, we have designed and investigated the performance of uniform Gray labeled $M$-QAM DBICM with LDPC codes. Two important properties regarding the DBICM capacity and its underlying delay schemes are presented. By applying these two properties, we found the optimal delay schemes for uniform Gray labeled $M$-QAM DBICM systems. Moreover, we proposed a code optimization algorithm and a constrained PEG-like algorithm to design capacity-approaching irregular LDPC codes for DBICM systems with the optimal delay schemes. Numerical results show that the designed LDPC coded DBICM is within $1$ dB of their capacity limits and it significantly outperforms their BICM counterparts for various rates and modulations. In future work, constellation design and quasi-cyclic LDPC code design for DBICM and DBICM-ID will be considered to further improve the decoding threshold and for hardware implementation.
	
	\section*{Acknowledgment}
	The authors would like to thank the associate editor and the anonymous reviewers for their constructive and valuable comments on the earlier versions of this paper. Especially, we would like to acknowledging that the proof for capacity-achieving delay schemes has been provided by one of the reviewers.
	\fnbelowfloat
	\appendices
	\section{}\label{app:added_proof}
	We prove that delay schemes $\mathbf{T} = [0,1,\cdots,m-1]$ achieves the constellation constrained capacity  $C_{\text{CM}}$. Consider an input $X$ distributed over a constellation with size $2^m$ whose label is $\bm{B} = \{B_0,\cdots, B_{m-1}\}$ and the corresponding channel output of a memory-less channel is $Y$, the capacity of DBICM with delay scheme $\mathbf{T}$ is 
		\begin{align}
		C_{\text{DBICM}}^{[0,1,\cdots,m-1]} = &\sum_{i=0}^{m-1}I(B_i; Y | B_0 \cdots B_{i-1})=I(\bm{B}; Y) \notag \\
		= & I(X;Y) = C_{\text{CM}}.
		\end{align}
	\section{Proof of Theorem \ref{thm:superposition}} \label{pf:proof_superposition}
	We prove that Eq. (\ref{eq:thm_1_2}) holds. Eq (\ref{eq:thm_1_1}) can be proved in a similar manner. 
	
	Define $\mathbf{T}_1\triangleq[\mathbf{T}_{\mathcal{A}},\mathbf{T}_{\mathcal{B}}]$ and $\mathbf{T}_2\triangleq[\mathbf{T}_{\mathcal{A}}',\mathbf{T}_{\mathcal{B}}]$, where $\mathbf{T}_{\mathcal{A}}\neq\mathbf{T}_{\mathcal{A}}'$. The collection of delayed coded bit labels for delay schemes $\mathbf{T}_1$ and $\mathbf{T}_2$ are $\mathcal{D}_1=\{i|T_{1_i}\neq0\}$ and $\mathcal{D}_2=\{i|T_{2_i}\neq0\}$, respectively. Furthermore, we define the collection of delayed coded bits labels for delay schemes $\mathbf{T}_{\mathcal{A}}$, $\mathbf{T}_{\mathcal{B}}$ and $\mathbf{T}_{\mathcal{A}}'$ as $\mathcal{D_A}\triangleq\mathcal{D}_1\cap\mathcal{A}$, $\mathcal{D_B}\triangleq\mathcal{D}_1\cap\mathcal{B}$ and $\mathcal{D'_A}\triangleq\mathcal{D}_2\cap\mathcal{A}$, respectively. We denote the realization of the delayed coded bits in $\mathcal{D}_1$, $\mathcal{D}_2$, $\mathcal{D_A}$, $\mathcal{D'_A}$, and $\mathcal{D_B}$ by $\mathbf{b}_{\mathcal{D}_1}$, $\mathbf{b}_{\mathcal{D}_1}$, $\mathbf{b}_{\mathcal{D_A}}$, $\mathbf{b}_{\mathcal{D'_A}}$, and $\mathbf{b}_{\mathcal{D_B}}$, distinctively. Obviously, the following relationships hold
		\begin{align}\label{eq:D1D2}
			&\mathcal{D}_1=\mathcal{D_A}\cup\mathcal{D_B}, 
			\quad 
			\mathcal{D}_2=\mathcal{D'_A}\cup\mathcal{D_B}, \notag \\
			&\mathbf{b}_{\mathcal{D}_1}=[{\mathbf{b}_{\mathcal{D_A}},\mathbf{b}_{\mathcal{D_B}}}],
			\quad
			\mathbf{b}_{\mathcal{D}_2}=[{\mathbf{b}_{\mathcal{D'_A}},\mathbf{b}_{\mathcal{D_B}}}].
		\end{align} According to Eq. (\ref{eq:C_dbicm_bit}), the bit-channel capacities of the $k$-th undelayed bit, for $k\in\mathcal{B}\setminus\mathcal{D_B}$, in the uniform Gray labeled $M$-QAM DBICM systems with delay schemes $\mathbf{T}_1$ and $\mathbf{T}_2$ are 
	\begin{align}\label{eq:D1}
 &C_{k,\text{DBICM}}^{\mathbf{T}_1}(M\text{-QAM}) \notag \\
 = & 1 - \frac{1}{2^{\abs{\mathcal{D}_1}}} \mathlarger{\sum_{\mathbf{b}_{\mathcal{D}_1}\in\mathbb{F}_{2}^{\abs{\mathcal{D}_1}}}}\mathop{\mathbb{E}}_{b,\mathbf{y}|\mathcal{D}_1} \left [\!\log_2\dfrac{\sum_{\mathbf{z} \in \chi^{\mathcal{D}_1}_{{\mathbf{b}_{\mathcal{D}_1}}}}p(\mathbf{y}|\mathbf{z})}{\sum_{\mathbf{z} \in \chi^{k,\mathcal{D}_1}_{b,{\mathbf{b}_{\mathcal{D}_1}}}}p(\mathbf{y}|\mathbf{z})}\! \right] \notag \\
\stackrel{\mathclap{\normalfont\mbox{(\ref{eq:D1D2})}}}{=}&\quad 1 - \frac{1}{2^{\abs{\mathcal{D_B}}}} \mathlarger{\sum_{\mathbf{b}_{\mathcal{D_B}}\in\mathbb{F}_{2}^{\abs{\mathcal{D_B}}}}}\frac{1}{2^{\abs{\mathcal{D_A}}}} \notag \\ 
& \quad \mathlarger{\sum_{\mathbf{b}_{\mathcal{D_A}}\in\mathbb{F}_{2}^{\abs{\mathcal{D_A}}}}}  \mathop{\mathbb{E}}_{b,\mathbf{y}|\mathcal{D}_1} \left [\!\log_2\dfrac{\sum_{\mathbf{z} \in \chi^{\mathcal{D}_1}_{{\mathbf{b}_{\mathcal{D}_1}}}}p(\mathbf{y}|\mathbf{z})}{\sum_{\mathbf{z} \in \chi^{k,\mathcal{D}_1}_{b,{\mathbf{b}_{\mathcal{D}_1}}}}p(\mathbf{y}|\mathbf{z})}\! \right].
	\end{align}
	
\begin{align}\label{eq:D2}
&C_{k,\text{DBICM}}^{\mathbf{T}_2} (M\text{-QAM}) \notag \\
= & 1 - \frac{1}{2^{\abs{\mathcal{D}_2}}}\mathlarger{\sum_{\mathbf{b}_{\mathcal{D}_2}\in\mathbb{F}_{2}^{\abs{\mathcal{D}_2}}}}\mathop{\mathbb{E}}_{b,\mathbf{y}|\mathcal{D}_2} \left [\!\log_2\dfrac{\sum_{\mathbf{z} \in \chi^{\mathcal{D}_2}_{{\mathbf{b}_{\mathcal{D}_2}}}}p(\mathbf{y}|\mathbf{z})}{\sum_{\mathbf{z} \in \chi^{k,\mathcal{D}_2}_{b,{\mathbf{b}_{\mathcal{D}_2}}}}p(\mathbf{y}|\mathbf{z})}\! \right] \notag \\
\stackrel{\mathclap{\normalfont\mbox{(\ref{eq:D1D2})}}}{=} &\quad 1 - \frac{1}{2^{\abs{\mathcal{D_B}}}}\mathlarger{\sum_{\mathbf{b}_{\mathcal{D_B}}\in\mathbb{F}_{2}^{\abs{\mathcal{D_B}}}}}\frac{1}{2^{\abs{\mathcal{D'_A}}}}\notag \\
&\quad \mathlarger{\sum_{\mathbf{b}_{\mathcal{D'_A}}\in\mathbb{F}_{2}^{\abs{\mathcal{D'_A}}}}}\mathop{\mathbb{E}}_{b,\mathbf{y}|\mathcal{D}_2} \left [\!\log_2\dfrac{\sum_{\mathbf{z} \in \chi^{\mathcal{D}_2}_{{\mathbf{b}_{\mathcal{D}_2}}}}p(\mathbf{y}|\mathbf{z})}{\sum_{\mathbf{z} \in \chi^{k,\mathcal{D}_2}_{b,{\mathbf{b}_{\mathcal{D}_2}}}}p(\mathbf{y}|\mathbf{z})}\! \right].
\end{align} 
To prove that Eq. (\ref{eq:D1}) and Eq. (\ref{eq:D2}) are identical, we show that the following holds for $k\in \mathcal{B}\setminus\mathcal{D_B}$
\begin{equation} \label{eq:tool_b}
\frac{\sum_{\mathbf{z} \in \chi^{\mathcal{D}_1}_{{{\mathbf{b}_{\mathcal{D}_1}}}}}p(\mathbf{y}|\mathbf{z})}{\sum_{\mathbf{z} \in \chi^{k,\mathcal{D}_1}_{b,\mathbf{b}_{\mathcal{D}_1}}}p(\mathbf{y}|\mathbf{z})} = \frac{\sum_{\mathbf{z} \in \chi^{\mathcal{D}_2}_{{{\mathbf{b}_{\mathcal{D}_2}}}}}p(\mathbf{y}|\mathbf{z})}{\sum_{\mathbf{z} \in \chi^{k,\mathcal{D}_2}_{b,\mathbf{b}_{\mathcal{D}_2}}}p(\mathbf{y}|\mathbf{z})},
\end{equation} for any pair of $\mathcal{D}_1$ and $\mathcal{D}_2$ as long as Eq. (\ref{eq:D1D2}) holds. Let $\tilde{b}$ be the opposite bit value of $b$. We note that
\begin{align} \label{eq:comp_gen}
& \frac{\sum_{\mathbf{z} \in \chi^{\mathcal{D}_1}_{{{\mathbf{b}_{\mathcal{D}_1}}}}}p(\mathbf{y}|\mathbf{z})}{\sum_{\mathbf{z} \in \chi^{k,\mathcal{D}_1}_{b,\mathbf{b}_{\mathcal{D}_1}}}p(\mathbf{y}|\mathbf{z})} - \frac{\sum_{\mathbf{z} \in \chi^{\mathcal{D}_2}_{{{\mathbf{b}_{\mathcal{D}_2}}}}}p(\mathbf{y}|\mathbf{z})}{\sum_{\mathbf{z} \in \chi^{k,\mathcal{D}_2}_{b,\mathbf{b}_{\mathcal{D}_2}}}p(\mathbf{y}|\mathbf{z})} \notag\\
=\quad & \frac{\sum_{\mathbf{z} \in \chi^{\mathcal{D}_1}_{{{\mathbf{b}_{\mathcal{D}_1}}}}}p(\mathbf{y}|\mathbf{z})\sum_{\mathbf{z} \in \chi^{k,\mathcal{D}_2}_{b,\mathbf{b}_{\mathcal{D}_2}}}p(\mathbf{y}|\mathbf{z})}{\sum_{\mathbf{z} \in \chi^{k,\mathcal{D}_1}_{b,\mathbf{b}_{\mathcal{D}_1}}}p(\mathbf{y}|\mathbf{z})\sum_{\mathbf{z} \in \chi^{k,\mathcal{D}_2}_{b,\mathbf{b}_{\mathcal{D}_2}}}p(\mathbf{y}|\mathbf{z})} \notag\\
& -\frac{\sum_{\mathbf{z} \in \chi^{\mathcal{D}_2}_{{{\mathbf{b}_{\mathcal{D}_2}}}}}p(\mathbf{y}|\mathbf{z})\sum_{\mathbf{z} \in \chi^{k,\mathcal{D}_1}_{b,\mathbf{b}_{\mathcal{D}_1}}}p(\mathbf{y}|\mathbf{z})}{\sum_{\mathbf{z} \in \chi^{k,\mathcal{D}_1}_{b,\mathbf{b}_{\mathcal{D}_1}}}p(\mathbf{y}|\mathbf{z})\sum_{\mathbf{z} \in \chi^{k,\mathcal{D}_2}_{b,\mathbf{b}_{\mathcal{D}_2}}}p(\mathbf{y}|\mathbf{z})} \notag\\
=\quad & \frac{\left(\sum_{\mathbf{z} \in \chi^{k,\mathcal{D}_1}_{b,{{\mathbf{b}_{\mathcal{D}_1}}}}}p(\mathbf{y}|\mathbf{z}) + \sum_{\mathbf{z} \in \chi^{k,\mathcal{D}_1}_{\tilde{b},{{\mathbf{b}_{\mathcal{D}_1}}}}}p(\mathbf{y}|\mathbf{z})\right)\sum_{\mathbf{z} \in \chi^{k,\mathcal{D}_2}_{b,\mathbf{b}_{\mathcal{D}_2}}}p(\mathbf{y}|\mathbf{z})}{\sum_{\mathbf{z} \in \chi^{k,\mathcal{D}_1}_{b,\mathbf{b}_{\mathcal{D}_1}}}p(\mathbf{y}|\mathbf{z})\sum_{\mathbf{z} \in \chi^{k,\mathcal{D}_2}_{b,\mathbf{b}_{\mathcal{D}_2}}}p(\mathbf{y}|\mathbf{z})} \notag\\
& -\frac{\left(\sum_{\mathbf{z} \in \chi^{k,\mathcal{D}_2}_{b,{{\mathbf{b}_{\mathcal{D}_2}}}}}p(\mathbf{y}|\mathbf{z}) + \sum_{\mathbf{z} \in \chi^{k,\mathcal{D}_2}_{\tilde{b},{{\mathbf{b}_{\mathcal{D}_2}}}}}p(\mathbf{y}|\mathbf{z})\right)\sum_{\mathbf{z} \in \chi^{k,\mathcal{D}_1}_{b,\mathbf{b}_{\mathcal{D}_1}}}p(\mathbf{y}|\mathbf{z})}{\sum_{\mathbf{z} \in \chi^{k,\mathcal{D}_1}_{b,\mathbf{b}_{\mathcal{D}_1}}}p(\mathbf{y}|\mathbf{z})\sum_{\mathbf{z} \in \chi^{k,\mathcal{D}_2}_{b,\mathbf{b}_{\mathcal{D}_2}}}p(\mathbf{y}|\mathbf{z})} \notag\\
=\quad & \frac{\sum_{\mathbf{z} \in \chi^{k,\mathcal{D}_1}_{\tilde{b},{{\mathbf{b}_{\mathcal{D}_1}}}}}p(\mathbf{y}|\mathbf{z})\sum_{\mathbf{z} \in \chi^{k,\mathcal{D}_2}_{b,\mathbf{b}_{\mathcal{D}_2}}}p(\mathbf{y}|\mathbf{z})}{\sum_{\mathbf{z} \in \chi^{k,\mathcal{D}_1}_{b,\mathbf{b}_{\mathcal{D}_1}}}p(\mathbf{y}|\mathbf{z})\sum_{\mathbf{z} \in \chi^{k,\mathcal{D}_2}_{b,\mathbf{b}_{\mathcal{D}_2}}}p(\mathbf{y}|\mathbf{z})} \notag\\
& - \frac{\sum_{\mathbf{z} \in \chi^{k,\mathcal{D}_2}_{\tilde{b},{{\mathbf{b}_{\mathcal{D}_2}}}}}p(\mathbf{y}|\mathbf{z})\sum_{\mathbf{z} \in \chi^{k,\mathcal{D}_1}_{b,\mathbf{b}_{\mathcal{D}_1}}}p(\mathbf{y}|\mathbf{z})}{\sum_{\mathbf{z} \in \chi^{k,\mathcal{D}_1}_{b,\mathbf{b}_{\mathcal{D}_1}}}p(\mathbf{y}|\mathbf{z})\sum_{\mathbf{z} \in \chi^{k,\mathcal{D}_2}_{b,\mathbf{b}_{\mathcal{D}_2}}}p(\mathbf{y}|\mathbf{z})} \notag\\
=\quad & \frac{\sum_{\mathbf{z} \in \chi^{k,\mathcal{D}_1}_{\tilde{b},{{\mathbf{b}_{\mathcal{D}_1}}}}}e^{-\frac{\norm{\mathbf{y}-\mathbf{z}}^2}{2\sigma^2}}\sum_{\mathbf{z} \in \chi^{k,\mathcal{D}_2}_{b,\mathbf{b}_{\mathcal{D}_2}}}e^{-\frac{\norm{\mathbf{y}-\mathbf{z}}^2}{2\sigma^2}}}{\sum_{\mathbf{z} \in \chi^{k,\mathcal{D}_1}_{b,\mathbf{b}_{\mathcal{D}_1}}}e^{-\frac{\norm{\mathbf{y}-\mathbf{z}}^2}{2\sigma^2}}\sum_{\mathbf{z} \in \chi^{k,\mathcal{D}_2}_{b,\mathbf{b}_{\mathcal{D}_2}}}e^{-\frac{\norm{\mathbf{y}-\mathbf{z}}^2}{2\sigma^2}}} \notag \\
& - \frac{ \sum_{\mathbf{z} \in \chi^{k,\mathcal{D}_2}_{\tilde{b},{{\mathbf{b}_{\mathcal{D}_2}}}}}e^{-\frac{\norm{\mathbf{y}-\mathbf{z}}^2}{2\sigma^2}}\sum_{\mathbf{z} \in \chi^{k,\mathcal{D}_1}_{b,\mathbf{b}_{\mathcal{D}_1}}}e^{-\frac{\norm{\mathbf{y}-\mathbf{z}}^2}{2\sigma^2}}}{\sum_{\mathbf{z} \in \chi^{k,\mathcal{D}_1}_{b,\mathbf{b}_{\mathcal{D}_1}}}e^{-\frac{\norm{\mathbf{y}-\mathbf{z}}^2}{2\sigma^2}}\sum_{\mathbf{z} \in \chi^{k,\mathcal{D}_2}_{b,\mathbf{b}_{\mathcal{D}_2}}}e^{-\frac{\norm{\mathbf{y}-\mathbf{z}}^2}{2\sigma^2}}} \notag \\
= \quad & \frac{\sum_{\mathbf{z}_{0}}\sum_{\mathbf{z}_{1}}e^{-\frac{\norm{\mathbf{y}-\mathbf{z}_{0}}^2+\norm{\mathbf{y}-\mathbf{z}_{1}}^2}{2\sigma^2}} - \sum_{\mathbf{z}_{2}}\sum_{\mathbf{z}_{3}}e^{-\frac{\norm{\mathbf{y}-\mathbf{z}_{2}}^2+\norm{\mathbf{y}-\mathbf{z}_{3}}^2}{2\sigma^2}}}{\sum_{\mathbf{z}_1}e^{-\frac{\norm{\mathbf{y}-\mathbf{z}}^2}{2\sigma^2}}\sum_{\mathbf{z}_3}e^{-\frac{\norm{\mathbf{y}-\mathbf{z}}^2}{2\sigma^2}}},
\end{align} where we define constellation symbols $\mathbf{z}_{0} \in \chi^{k,\mathcal{D}_1}_{\tilde{b},{{\mathbf{b}_{\mathcal{D}_1}}}}$, $\mathbf{z}_{1} \in \chi^{k,\mathcal{D}_2}_{b,\mathbf{b}_{\mathcal{D}_2}}$, $\mathbf{z}_{2} \in \chi^{k,\mathcal{D}_2}_{\tilde{b},{{\mathbf{b}_{\mathcal{D}_2}}}}$, and $\mathbf{z}_{3} \in \chi^{k,\mathcal{D}_1}_{b,{{\mathbf{b}_{\mathcal{D}_1}}}}$. Recall Definition \ref{def:mcbpp}, bits in positions within set $\mathcal{A}=\{0,1,...,\frac{m}{2}-1\}$ and $\mathcal{B}=\{\frac{m}{2},\frac{m}{2}+1,...,m-1\}$ are mapped to the real and imaginary part of the constellation points, respectively. Due to the Gray labeling, constellation symbols with the same real part or imaginary part share the same values for labeled bits in group $\mathcal{A}$ or $\mathcal{B}$. The relationship among the four subsets can be expressed as the following
\begin{equation}\label{eq:complex_presentation_general}
\left\{
\begin{aligned} 
& \Re(\chi^{k,\mathcal{D}_1}_{\tilde{b},{{\mathbf{b}_{\mathcal{D}_1}}}}) = \Re(\chi^{k,\mathcal{D}_1}_{b,{{\mathbf{b}_{\mathcal{D}_1}}}}),\\
& \Re(\chi^{k,\mathcal{D}_2}_{b,\mathbf{b}_{\mathcal{D}_2}}) = \Re(\chi^{k,\mathcal{D}_2}_{\tilde{b},{{\mathbf{b}_{\mathcal{D}_2}}}}),\\
& \Im(\chi^{k,\mathcal{D}_1}_{\tilde{b},{{\mathbf{b}_{\mathcal{D}_1}}}}) = \Im(\chi^{k,\mathcal{D}_2}_{\tilde{b},{{\mathbf{b}_{\mathcal{D}_2}}}}),\\
& \Im(\chi^{k,\mathcal{D}_2}_{b,\mathbf{b}_{\mathcal{D}_2}}) = \Im(\chi^{k,\mathcal{D}_1}_{b,{{\mathbf{b}_{\mathcal{D}_1}}}}).
\end{aligned}
\right.
\end{equation} For example, a $64$-QAM DBICM shown in Fig. \ref{fig:appa_sample_subfigures} with $\mathcal{D}_1=\{2\}$ and $\mathcal{D}_2=\{1\}$, $k=4$ satisfies
	\begin{equation}\label{eq:example1}
	\left\{
	\begin{aligned} 
	& \Re(\chi^{4,\{2\}}_{1,0}) = \Re(\chi^{4,\{2\}}_{0,0}) \text{ and } \Re(\chi^{4,\{2\}}_{1,1}) = \Re(\chi^{4,\{2\}}_{0,1}),\\
	& \Re(\chi^{4,\{1\}}_{0,0}) = \Re(\chi^{4,\{1\}}_{1,0}) \text{ and } \Re(\chi^{4,\{1\}}_{0,1}) = \Re(\chi^{4,\{1\}}_{1,1}),\\
	& \Im(\chi^{4,\{2\}}_{1,0}) = \Re(\chi^{4,\{1\}}_{1,0}) \text{ and } \Im(\chi^{4,\{2\}}_{1,1}) = \Re(\chi^{4,\{1\}}_{1,1}),\\
	& \Im(\chi^{4,\{1\}}_{0,0}) = \Im(\chi^{4,\{2\}}_{1,0}) \text{ and } \Im(\chi^{4,\{1\}}_{0,1}) = \Im(\chi^{4,\{2\}}_{1,1}).
	\end{aligned}
	\right.
	\end{equation}
	\begin{figure}[h]
	\centering
	\subfigure[$k=4,\mathcal{D}_1=\{2\}$.]
	{
		\includegraphics[width=0.45\textwidth]{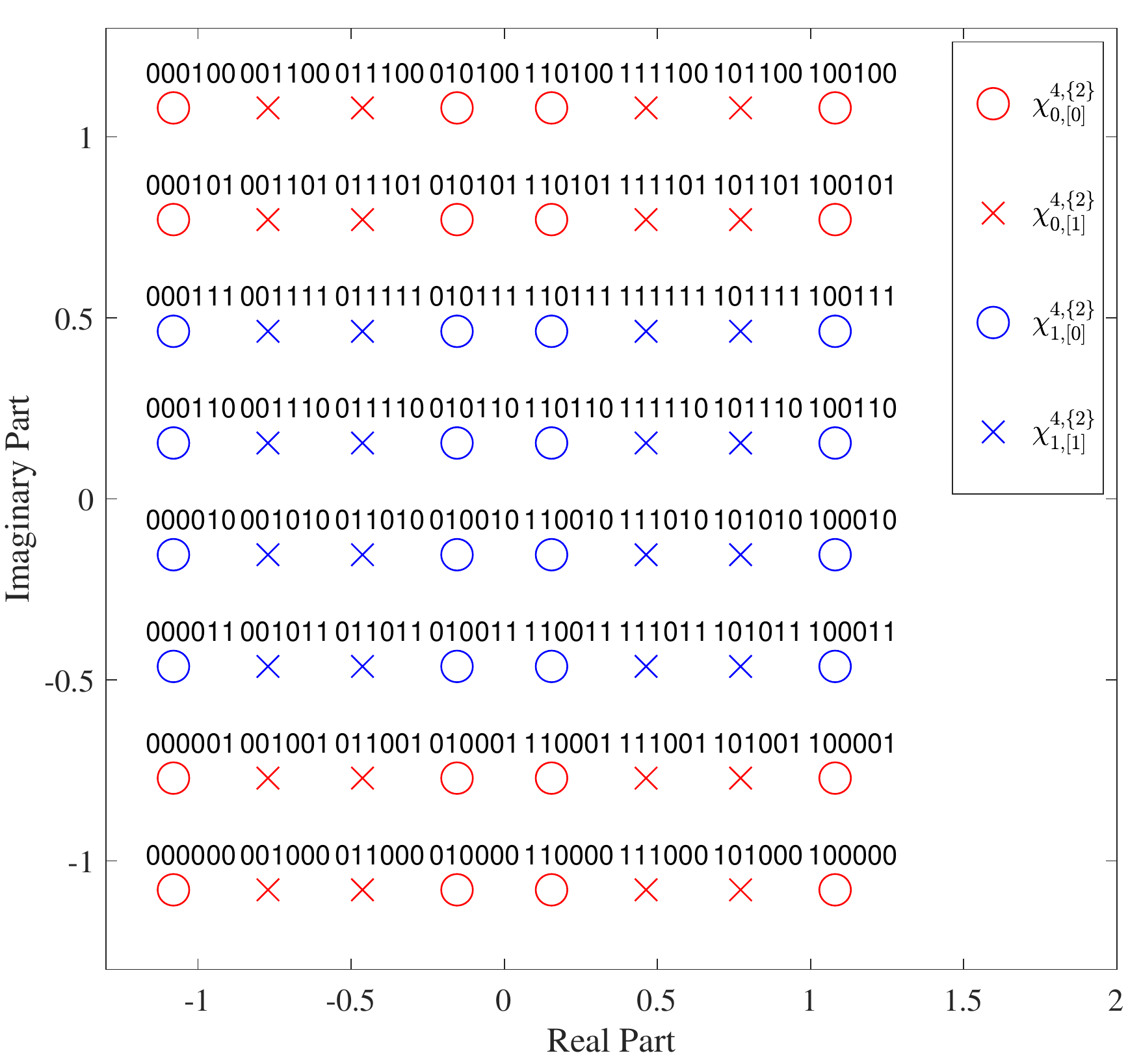}
		\label{fig:appa_gray_64_const_sub_group_a}
	}
	\subfigure[$k=4,\mathcal{D}_2=\{1\}$.]
	{
		\includegraphics[width=0.45\textwidth]{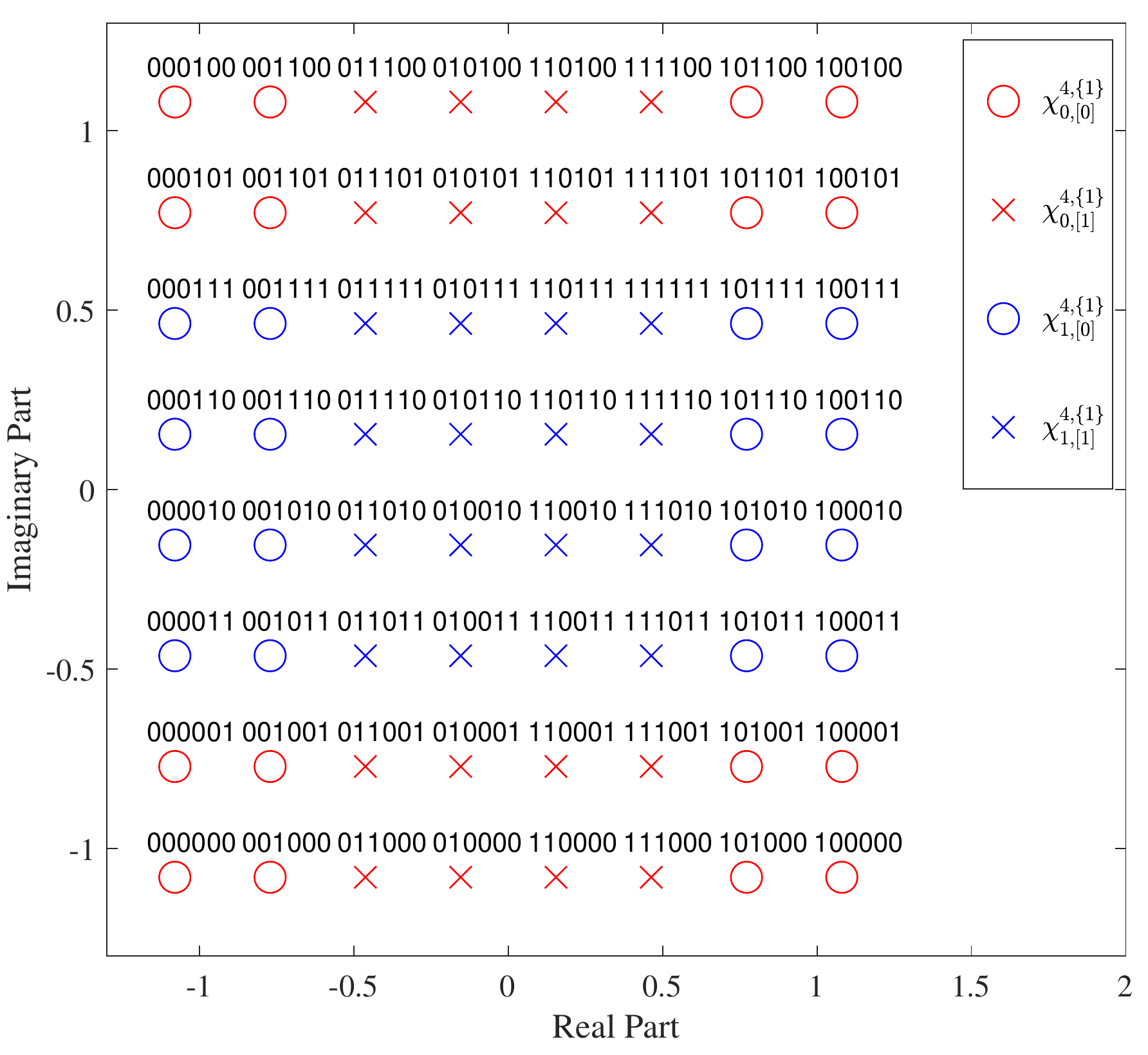}
		\label{fig:appa_gray_64_const_sub_group_b}
	}
	\caption{An example of the geometry relationship among subsets in a uniform Gray labeled $64$-QAM constellation.}
	\label{fig:appa_sample_subfigures}
\end{figure} 

As a result, from Eq. (\ref{eq:comp_gen}),
\begin{align}\label{eq:dis_sqrt_eq_gen}
&\norm{\mathbf{y}-\mathbf{z}_{0}}^2+\norm{\mathbf{y}-\mathbf{z}_{1}}^2\notag\\
=\quad& (\Re(\mathbf{y}) - \Re(\mathbf{z}_{0}))^2 + (\Im(\mathbf{y}) - \Im(\mathbf{z}_{0}))^2  \notag \\
& + (\Re(\mathbf{y}) - \Re(\mathbf{z}_{1}))^2 + (\Im(\mathbf{y}) - \Im(\mathbf{z}_{1}))^2 \notag\\
\stackrel{\mathclap{\normalfont\mbox{(\ref{eq:complex_presentation_general})}}}{=}\quad& (\Re(\mathbf{y}) - \Re(\mathbf{z}_{3}))^2 + (\Im(\mathbf{y}) - \Im(\mathbf{z}_{2}))^2 \notag \\
& + (\Re(\mathbf{y}) - \Re(\mathbf{z}_{2}))^2 + (\Im(\mathbf{y}) - \Im(\mathbf{z}_{3}))^2 \notag\\
=\quad& \norm{\mathbf{y}-\mathbf{z}_{3}}^2+\norm{\mathbf{y}-\mathbf{z}_{2}}^2.
\end{align}
Substituting Eq. (\ref{eq:dis_sqrt_eq_gen}) into Eq. (\ref{eq:comp_gen}), we obtain Eq. (\ref{eq:tool_b}). Hence, Eq. (\ref{eq:D1}) and Eq. (\ref{eq:D2}) are equal. For $k\in\mathcal{B}\setminus\mathcal{D_B}$, this leads to
	\begin{align}\label{eq:appa_mid_step} &\frac{1}{2^{\abs{\mathcal{D_A}}}}\mathlarger{\sum_{\mathbf{b}_{\mathcal{D_A}}\in\mathbb{F}_{2}^{\abs{\mathcal{D_A}}}}}\mathop{\mathbb{E}}_{b,\mathbf{y}|\mathcal{D}_1} \left [\!\log_2\dfrac{\sum_{\mathbf{z} \in \chi^{\mathcal{D}_1}_{{\mathbf{b}_{\mathcal{D}_1}}}}p(\mathbf{y}|\mathbf{z})}{\sum_{\mathbf{z} \in \chi^{k,\mathcal{D}_1}_{b,{\mathbf{b}_{\mathcal{D}_1}}}}p(\mathbf{y}|\mathbf{z})}\! \right] \notag \\
		  \stackrel{\mathclap{\normalfont\mbox{(a)}}}{=} & \mathop{\mathbb{E}}_{b,\mathbf{y}|\mathcal{D_B}} \left [\!\log_2\dfrac{\sum_{\mathbf{z} \in \chi^k_{\mathbf{b}_{\mathcal{D_B}}}}p(\mathbf{y}|\mathbf{z})}{\sum_{\mathbf{z} \in \chi^{k,\mathcal{D_B}}_{b,\mathbf{b}_{\mathcal{D_B}}}}p(\mathbf{y}|\mathbf{z})}\! \right],
	\end{align} where (a) follows by letting ${\mathcal{D_A}}=\emptyset$. Consequently, we can write Eq. (\ref{eq:D1}) into 
	\begin{align}\label{eq:newD1}
&\qquad C_{k,\text{DBICM}}^{\mathbf{T}_1} (M\text{-QAM}) \notag \\
&\stackrel{\mathclap{\normalfont\mbox{(\ref{eq:appa_mid_step})}}}{=}\quad  1 - \frac{1}{2^{\abs{\mathcal{D_B}}}}\mathlarger{\sum_{\mathbf{b}_{\mathcal{D_B}}\in\mathbb{F}_{2}^{\abs{\mathcal{D_B}}}}}\mathop{\mathbb{E}}_{b,\mathbf{y}|\mathcal{D_B}} \left [\!\log_2\dfrac{\sum_{\mathbf{z} \in \chi^{\mathcal{D_B}}_{{\mathbf{b}_{\mathcal{D_B}}}}}p(\mathbf{y}|\mathbf{z})}{\sum_{\mathbf{z} \in \chi^{k,\mathcal{D_B}}_{b,{\mathbf{b}_{\mathcal{D_B}}}}}p(\mathbf{y}|\mathbf{z})}\! \right] \notag \\
&=  1 - \frac{1}{2^{\abs{\mathcal{D_B}}}}\mathlarger{\sum_{\mathbf{b}_{\mathcal{D_B}}\in\mathbb{F}_{2}^{\abs{\mathcal{D_B}}}}}\mathop{\mathbb{E}}_{b,\mathbf{y}|\mathcal{D_B}} \notag \\
& \qquad \left [\!\log_2\dfrac{\sum_{\mathbf{z} \in \chi^{\mathcal{D_B}}_{{\mathbf{b}_{\mathcal{D_B}}}}}p(\Re(\mathbf{y})|\Re(\mathbf{z})) p(\Im(\mathbf{y})|\Im(\mathbf{z}))}{\sum_{\mathbf{z} \in \chi^{k,\mathcal{D_B}}_{b,{\mathbf{b}_{\mathcal{D_B}}}}}p(\Re(\mathbf{y})|\Re(\mathbf{z})) p(\Im(\mathbf{y})|\Im(\mathbf{z}))}\! \right] \notag \\
&\stackrel{\mathclap{\normalfont\mbox{(b)}}}{=}   1 - \frac{1}{2^{\abs{\mathcal{D_B}}}}\mathlarger{\sum_{\mathbf{b}_{\mathcal{D_B}}\in\mathbb{F}_{2}^{\abs{\mathcal{D_B}}}}}\mathop{\mathbb{E}}_{b,\mathbf{y}|\mathcal{D_B}} \notag \\
& \quad \left [\!\log_2\dfrac{\left(\sum_{\mathbf{z} \in \Re(\chi^{\mathcal{D_B}}_{{\mathbf{b}_{\mathcal{D_B}}}})}p(\Re(\mathbf{y})|\mathbf{z})\right)\left(\sum_{\mathbf{z} \in \Im(\chi^{\mathcal{D_B}}_{{\mathbf{b}_{\mathcal{D_B}}}})} p(\Im(\mathbf{y})|\mathbf{z})\right)}{\left(\sum_{\mathbf{z} \in \Re(\chi^{k,\mathcal{D_B}}_{b,{\mathbf{b}_{\mathcal{D_B}}}})}p(\Re(\mathbf{y})|\mathbf{z})\right) \left(\sum_{\mathbf{z} \in \Im(\chi^{k,\mathcal{D_B}}_{b,{\mathbf{b}_{\mathcal{D_B}}}})} p(\Im(\mathbf{y})|\mathbf{z})\right)}\! \right] \notag \\
&\stackrel{\mathclap{\normalfont\mbox{(c)}}}{=}  1 - \frac{1}{2^{\abs{\mathcal{D_B}}}}\mathlarger{\sum_{\mathbf{b}_{\mathcal{D_B}}\in\mathbb{F}_{2}^{\abs{\mathcal{D_B}}}}}\mathop{\mathbb{E}}_{b,\mathbf{y}|\mathcal{D_B}} \left [\!\log_2\dfrac{\sum_{\mathbf{z} \in \Im(\chi^{\mathcal{D_B}}_{{\mathbf{b}_{\mathcal{D_B}}}})}p(\Im(\mathbf{y})|\mathbf{z})}{\sum_{\mathbf{z} \in \Im(\chi^{k,\mathcal{D_B}}_{b,{\mathbf{b}_{\mathcal{D_B}}}})}p(\Im(\mathbf{y})|\mathbf{z})}\! \right] \notag \\
& =C_{k-\frac{m}{2},\text{DBICM}}^{\mathbf{T}_{\mathcal{B}}} (\sqrt{M}\text{-PAM}),
\end{align} where step (b) follows that $\chi^{\mathcal{D_B}}_{\mathbf{b}_{\mathcal{D_B}}}$ and $\chi^{k,\mathcal{D_B}}_{b,\mathbf{b}_{\mathcal{D_B}}}$ can be reconstructed via the Cartesian product of their real and imaginary parts. Step (c) follows $\Re(\chi)=\Re(\chi^{\mathcal{D_B}}_{\mathbf{b}_{\mathcal{D_B}}}) = \Re(\chi^{k,\mathcal{D_B}}_{b,\mathbf{b}_{\mathcal{D_B}}})$ as only labeled bits in group $\mathcal{A}$ is associated with the real part of the constellation symbol. Similarly, substituting Eq. (\ref{eq:appa_mid_step}) into Eq. (\ref{eq:D2}) results in
\begin{equation}\label{eq:newD2}
C_{k,\text{DBICM}}^{\mathbf{T}_2}  (M\text{-QAM}) =C_{k-\frac{m}{2},\text{DBICM}}^{\mathbf{T}_{\mathcal{B}}} (\sqrt{M}\text{-PAM}).
\end{equation} This completes the proof.
	\section{Proof of Theorem \ref{thm:D_D'_symmetric}} \label{pf:proof_D_D'_symmetric}
	\begin{proof} 
		Since $\mathbf{T}$ and $\mathbf{T}'$ are a pair of symmetric scheme according to Def. \ref{def:symmetric_schemes}, i.e., $\mathbf{T}=[\mathbf{T}_{\mathcal{A}}, \mathbf{T}_{\mathcal{B}}]$, $\mathbf{T}'=[\mathbf{T}_{\mathcal{B}}, \mathbf{T}_{\mathcal{A}}]$. We denote the collection of the delayed coded bits in $\mathbf{T}$ and $\mathbf{T}'$ by $\mathcal{D}=\{i|T_i\neq0\}$ and $\mathcal{D'}=\{i|T_{i}'\neq0\}$. The relationship between $k\in\mathcal{D}$ and $k'\in\mathcal{D}'$ satisfies
		\begin{equation}\label{eq:kk'DD'}
			\begin{array}{ll}
			k'=k+\frac{m}{2}, k \in \mathcal{D}\cap\mathcal{A}, k' \in \mathcal{D}'\cap\mathcal{B}, \\
			k'=k-\frac{m}{2}, k \in \mathcal{D}\cap\mathcal{B}, k' \in \mathcal{D}'\cap\mathcal{A}.
			\end{array}
		\end{equation} We also denote the collection of the undelayed coded bits in $\mathbf{T}$ and $\mathbf{T}'$ by $\tilde{\mathcal{D}}=\{i|T_i=0\}$ and $\tilde{\mathcal{D'}}=\{i|T_{i}'=0\}$. Similarly, the relationship between $k\in\tilde{\mathcal{D}}$ and $k'\in\tilde{\mathcal{D}'}$ satisfies
		\begin{equation}\label{eq:k'_D'}
		\begin{array}{ll}
		k'=k+\frac{m}{2}, k \in \tilde{\mathcal{D}}\cap\mathcal{A}, k' \in \tilde{\mathcal{D}'}\cap\mathcal{B}, \\
		k'=k-\frac{m}{2}, k \in \tilde{\mathcal{D}}\cap\mathcal{B}, k' \in \tilde{\mathcal{D}'}\cap\mathcal{A}.
		\end{array}
		\end{equation} 
		Following Eq. (\ref{eq:C_dbicm_overall}), the delayed coded bit in a DBICM system has the same bit-channel capacity as that bit in a BICM system, we have
		\begin{align} \label{eq:C_dbicm_T}
			&C_{\text{DBICM}}^{\mathbf{T}}(\text{$M$-QAM}) \notag \\
			= & \sum\limits_{k\in\mathcal{D}}C_{k,\text{BICM}}(\text{$M$-QAM})  + \sum\limits_{k\in\tilde{\mathcal{D}}} C_{k,\text{DBICM}}^{\mathbf{T}}(\text{$M$-QAM}).
		\end{align} 

\begin{align}\label{eq:C_dbicm_T'}
&C_{\text{DBICM}}^{\mathbf{T'}}(\text{$M$-QAM}) \notag \\
= & \sum\limits_{k'\in\mathcal{D'}}C_{k',\text{BICM}}(\text{$M$-QAM})  + \sum\limits_{k'\in\tilde{\mathcal{D'}}} C_{k',\text{DBICM}}^{\mathbf{T'}}(\text{$M$-QAM}).
\end{align}
Using Definition \ref{def:mcbpp} and Eq. (\ref{eq:kk'DD'}), each pair of symmetric bits share identical bit channel capacity, we have
\begin{align}\label{eq:sym_apb}
	\sum\limits_{k\in\mathcal{D}}C_{k,\text{BICM}}(\text{$M$-QAM}) = \sum\limits_{k'\in\mathcal{D}'}C_{k',\text{BICM}}(\text{$M$-QAM}).
\end{align}
By using Eqs. (\ref{eq:thm_1_1}-\ref{eq:thm_1_2}) from Theorem \ref{thm:superposition}, the second terms in Eqs. (\ref{eq:C_dbicm_T}-\ref{eq:C_dbicm_T'}) can be written as
\begin{align}
\sum\limits_{k\in\tilde{\mathcal{D}}} C_{k,\text{DBICM}}^{\mathbf{T}}(\text{$M$-QAM}) = & \sum\limits_{k\in\mathcal{B}\cap\tilde{\mathcal{D}}} C_{k-\frac{m}{2},\text{DBICM}}^{\mathbf{T}_\mathcal{B}}(\text{$\sqrt{M}$-QAM}) \notag \\
 &+ \sum\limits_{k\in\mathcal{A}\cap\tilde{\mathcal{D}}} C_{k,\text{DBICM}}^{\mathbf{T}_\mathcal{A}}(\text{$\sqrt{M}$-QAM}),
\end{align}
\begin{align}
\sum\limits_{k'\in\tilde{\mathcal{D}'}} C_{k',\text{DBICM}}^{\mathbf{T}'}(\text{$M$-QAM}) = & \sum\limits_{k'\in\mathcal{B}\cap\tilde{\mathcal{D}'}} C_{k'-\frac{m}{2},\text{DBICM}}^{\mathbf{T}_\mathcal{A}}(\text{$\sqrt{M}$-QAM}) \notag \\ &+ \sum\limits_{k'\in\mathcal{A}\cap\tilde{\mathcal{D}'}} C_{k',\text{DBICM}}^{\mathbf{T}_\mathcal{B}}(\text{$\sqrt{M}$-QAM}).
\end{align} From Eq. (\ref{eq:k'_D'}), we note that
\begin{align}\label{eq:apb_1}
&\sum\limits_{k'\in\mathcal{A}\cap\tilde{\mathcal{D}'}} C_{k',\text{DBICM}}^{\mathbf{T}_\mathcal{B}}(\text{$\sqrt{M}$-QAM}) \notag \\ 
= & \sum\limits_{k\in\mathcal{B}\cap\tilde{\mathcal{D}}} C_{k-\frac{m}{2},\text{DBICM}}^{\mathbf{T}_\mathcal{B}}(\text{$\sqrt{M}$-QAM}),
\end{align}
\begin{align}\label{eq:apb_2}
&\sum\limits_{k'\in\mathcal{B}\cap\tilde{\mathcal{D}'}} C_{k'-\frac{m}{2},\text{DBICM}}^{\mathbf{T}_\mathcal{A}}(\text{$\sqrt{M}$-QAM}) \notag \\
= & \sum\limits_{k\in\mathcal{A}\cap\tilde{\mathcal{D}}} C_{k,\text{DBICM}}^{\mathbf{T}_\mathcal{A}}(\text{$\sqrt{M}$-QAM}).
\end{align}
Substituting Eqs.(\ref{eq:sym_apb}-\ref{eq:apb_2}) into Eqs.(\ref{eq:C_dbicm_T}-\ref{eq:C_dbicm_T'}) gives
\begin{equation}
	C^{\mathbf{T}}_{\text{DBICM}}(M\text{-QAM}) = C^{\mathbf{T}'}_{\text{DBICM}}(M\text{-QAM})
\end{equation} This completes the proof.
	\end{proof}

	\section{Proof of Proposition \ref{prop:p}}\label{pf:proof_prop_1}
	We start from Eq. (\ref{eq:C_dbicm_overall}):
	\begin{align}
	&\quad C^{\mathbf{T}}_{\text{DBICM}}(\text{$M$-QAM})\notag \\
	= &\quad \sum\limits_{j\in\mathcal{D}}C_{j,\text{BICM}}(\text{$M$-QAM}) + \sum\limits_{k\in\tilde{\mathcal{D}}}C_{k,\text{DBICM}}^{\mathbf{T}}(\text{$M$-QAM}) \notag\\ 
	\stackrel{\mathclap{\normalfont\mbox{(d)}}}{=}&\quad \sum\limits_{j\in\mathcal{D}}C_{j,\text{BICM}}(\text{$M$-QAM}) +  \sum\limits_{k\in\tilde{\mathcal{D}}\cap\mathcal{A}}C_{k,\text{DBICM}}^{[\mathbf{T}_{\mathcal{A}}, \mathbf{T}_{\mathcal{B}}]}\text{($M$-QAM)} \notag \\
	 & + \quad  \sum\limits_{(k+\frac{m}{2})\in\tilde{\mathcal{D}}\cap\mathcal{B}}C_{k,\text{DBICM}}^{[\mathbf{T}_{\mathcal{B}}, \mathbf{T}_{\mathcal{A}}]}\text{($M$-QAM)} \notag\\ 
	\stackrel{\mathclap{\normalfont\mbox{(e)}}}{=}&\quad \sum\limits_{j\in\mathcal{D}}C_{j,\text{BICM}}(\text{$M$-QAM}) + \sum\limits_{k\in\tilde{\mathcal{D}}\cap\mathcal{A}}C_{k,\text{DBICM}}^{\mathbf{T}_{\mathcal{A}}}\text{($\sqrt{M}$-PAM)} \notag \\
	 & + \quad \sum\limits_{(k+\frac{m}{2})\in\tilde{\mathcal{D}}\cap\mathcal{B}}C_{k,\text{DBICM}}^{\mathbf{T}_{\mathcal{B}}}\text{($\sqrt{M}$-PAM)} \notag\\ 
	\stackrel{\mathclap{\normalfont\mbox{(f)}}}{=}&\quad \sum\limits_{j\in\mathcal{D}\cap\mathcal{A}}C_{j,\text{BICM}}\text{($\sqrt{M}$-PAM)} \notag \\
	& + \quad \sum\limits_{(j+\frac{m}{2})\in\mathcal{D}\cap\mathcal{B}}C_{j,\text{BICM}}\text{($\sqrt{M}$-PAM)} \notag \\
	 & + \quad \sum\limits_{k\in\tilde{\mathcal{D}}\cap\mathcal{A}}C_{k,\text{DBICM}}^{\mathbf{T}_{\mathcal{A}}}\text{($\sqrt{M}$-PAM)} \notag \\
	 & + \quad \sum\limits_{(k+\frac{m}{2})\in\tilde{\mathcal{D}}\cap\mathcal{B}}C_{k,\text{DBICM}}^{\mathbf{T}_{\mathcal{B}}}\text{($\sqrt{M}$-PAM)} \notag \\
	 \quad\stackrel{\mathclap{\normalfont\mbox{(g)}}}{=} & \quad C_{\text{DBICM}}^{\mathbf{T}_{\mathcal{A}}}\text{($\sqrt{M}$-PAM)} + C_{\text{DBICM}}^{\mathbf{T}_{\mathcal{B}}}\text{($\sqrt{M}$-PAM)},
	\end{align} where (d) is by applying Theorem \ref{thm:D_D'_symmetric} to the last term, (e) results from applying Theorem \ref{thm:superposition} to the last two terms, (f) follows that the capacity of BICM $M$-QAM is the sum of the capacities of two BICM $\sqrt{M}$-PAM \cite{1021039}, (g) follows from Eq. (\ref{eq:C_dbicm_overall}) again.

	\bibliographystyle{ieeetr}
	\bibliography{pubs}

\end{document}